\newtheorem{theorem}{Theorem}
\newtheorem{proof}{Proof}
\begin{document}

\title{Realizations of Isostatic Material Frameworks}
\author{Mahdi Sadjadi}
\affiliation{Department of Physics,
Arizona State University, Tempe, AZ 85287-1604}
\author{Varda F. Hagh}
\altaffiliation[Current address: ]
{James Franck Institute, University of Chicago, Chicago, IL 60637, USA}
\affiliation{Department of Physics,
Arizona State University, Tempe, AZ 85287-1604}
\author{Mingyu Kang}
\affiliation{CISE department, University of Florida, Gainesville, FL 32611-6120}
\author{Meera Sitharam}
\affiliation{CISE department, University of Florida, Gainesville, FL 32611-6120}
\author{Robert Connelly}
\affiliation{Department of Mathematics,
Cornell University, Ithaca, NY 14853}
\author{Steven J. Gortler}
\affiliation{School of Engineering and Applied Sciences, Harvard University, MA}
\author{Louis Theran}
\affiliation{School of Mathematics and Statistics, University of St Andrews, St Andrews KY16 9SS, Scotland}
\author{Miranda Holmes-Cerfon}
\affiliation{Courant Institute of Mathematical Sciences, New York University, NY}
\author{M.F. Thorpe}
\email[Corresponding author: ]{mft@asu.edu}
\affiliation{Department of Physics,
Arizona State University, Tempe, AZ 85287-1604}

\begin{abstract}
This paper studies the set of equivalent realizations of isostatic frameworks in two dimensions, and algorithms for finding  all such realizations.
We show that an isostatic framework has an even number of equivalent realizations that preserve edge lengths and connectivity.
We enumerate the complete set of  equivalent realizations for a toy framework with pinned boundary in two dimensions and study the impact of boundary length
 on the emergence of these realizations. To ameliorate the computational
complexity of finding a solution to a large multivariate quadratic system corresponding to the constraints;
 alternative methods - based on
constraint reduction and distance-based covering map or Cayley parameterization of the search space - are presented.
The application of these methods is studied on atomic clusters, a model two-dimensional glasses and jamming.
\end{abstract}
\maketitle


\section{Introduction}
A wide range of materials properties can be understood by modelling them
as mass-spring networks, or \emph{graphs with constrained edge-lengths}, where sites, or \emph{vertices}, are interacting via harmonic springs or \emph{edge-length constraints}. Examples include:
auxetic phases of matter~\cite{hagh2018auxetic} and mechanical metamaterials~\cite{bertoldi2017flexible,baardink2017localizing}. This network
representation contains topological and geometrical information. The topology of
a network determines how sites are connected while its geometry
determines the position of sites and in turn other geometrical properties
such as bond lengths and angles. Both geometrical and topological properties
of networks are crucial to control its response to mechanical
deformations which determines the rigidity of that structure~\cite{thorpe1983continuous}.
It is, therefore, not surprising that much research has been dedicated to tuning
materials properties by modifying the connectivity and geometry of networks.
Within the context of solid state physics, most studies have been focused on the topological design of networks in which
bonds are arranged such that the network response is optimized for a
given mechanical force/load~\cite{goodrich2015principle,hagh2018jamming}.

However, relatively less attention has been given to the geometrical \emph{realization} of
a network, i.e., the assignment of coordinates to
its sites in a given spatial dimension. In the study of geometric constraint
systems \cite{SitharamEtAl2018}, given a
graph $G$ with edge-length constraints, the realizations $p$ that satisfy those
constraints are called \emph{equivalent frameworks} $(G,p)$.
A given graph with constrained edge-lengths, can have many realizations. For example,
consider two triangles that share a common edge (4 vertices and five edges).
Here there are two realizations - one fully extended with no edges that cross
and the other folded about the common edge shared by the two triangles. This
gives a total of two realizations which is an example of the more general case
of an isostatic network having an even number of realizations that is shown and
extensively used in this paper.

The problem of finding network realizations has been applied to several physical
problems. The most well-known example is the so-called ``NMR problem'' where
pairwise distances between atoms are found using
nuclear magnetic resonance (NMR) spectroscopy~\cite{wuthrich1989protein} and
the three-dimensional protein conformation is inferred from the data~\cite{hendrickson1995molecule}.
Other examples include: survey and satellite imaging~\cite{killian1969einige},
localization of sensor networks~\cite{moore2004robust}, and conformation control
for allostery~\cite{rocks2017,kim2018conformational}.
Since the bond lengths are assigned to specific bonds, the problem is sometimes
referred to as
\textit{assigned distance problem} known to be NP-hard ~\cite{liberti2014euclidean}, while the
problem of finding realizations given only a list of bond lengths, that are not
assigned to specific bonds, is called the unassigned distance problem~\cite{duxbury2016unassigned}.

The network representation of a material is useful for both crystalline and
non-crystalline materials. The main difference is that crystalline structures
have only a single minimal energy conformation, while disordered
systems have a rough energy landscape with many
local minima. Hence, non-crystalline materials can attain various conformations
if such transitions are energetically accessible. In a large class,
the transition corresponds to a structural change in which atoms attain
new positions while the connectivity (the atoms they interact with) and the bond lengths remain unaltered, i.e., the conformations are in fact
equivalent frameworks.
For example, the anomalous properties of glasses such silica at low-temperature are
attributed to the two-level states (TLSs) in which the glass can tunnel between two conformations~\cite{Phillips1972,Anderson1972,Phillips1999}. These conformational
changes are believed to be localized, consistent with the thermal energy available
at about~$\sim1$ K where the anomalous specific heat is observed.
However, after half a century intensive research on TLSs, the geometrical
realizations, or equivalent frameworks, of these localized modes are still elusive.
The synthesis and
imaging of silica bilayers~\cite{Huang2012,lichtenstein2012atomic} in recent
years has reinvigorated open problems in physics of glasses by unveiling a structure which follows the continuous random model~\cite{rosenhain1927structure,zachariasen1932atomic}
and makes the actual coordinates of atoms available; albeit in two dimensions (2D). This newly available data on two dimensional glasses makes the interface between theory and experiment a lot easier; not least because visualization is so much easier in two dimensions.

The remainder of this paper is organized as follows. We first review the fundamental concepts in rigidity and present the theorem that states an isostatic network has an even number of realizations. Then we describe several methods to find realizations of an isostatic graph using a toy model,  using constraint reduction and  Cayley parameterization. Lastly, we apply these methods to a series of larger networks, generated computationally or experimentally, to find their realizations and discuss the physics of transition between such states.

\section{Mathematical Background}
\label{sec:background}
We aim to find all realizations of a network or graph with vertices connected by edges with given edge-lengths. A realization is the assignment of coordinates to vertices such that all edges satisfy their given lengths. A graph together with a realization is called a \emph{framework}. Frameworks that satisfy the same set of edge lengths are called \emph{equivalent.}
A realization is a solution to the set of edge length equations. Let $(x_i, y_i)$ be the coordinates of vertex $i$ in two-dimensions (2D). If vertices
$i$ and $j$ are connected through an edge with the length $s$, we can write:
\begin{equation}
  \label{eq:edgelens}
 (x_i-x_j)^2 + (y_i-y_j)^2 = s^2.
\end{equation}
Every edge in the graph has a corresponding \textit{edge length equation}. This is a geometric constraint problem that has been studied extensively from multiple perspectives from distance geometry, to algebraic geometry and automated geometry to structural or combinatorial rigidity and arises in a wide variety of applications. We refer a reader to a recent handbook for background, perspectives and recent work \cite{SitharamEtAl2018}.
An isostatic network has the minimum number of independent constraints or equations to make the graph  rigid, i.e. to ensure locally unique solutions \textit{generically} exist. It is this marginal state that separates overconstrained (more constraints than necessary for minimal rigidity) from underconstrained (fewer constraints than necessary for rigidity).
As mentioned earlier, in general, checking whether a real solution exists to such a  system of equations is known to be NP-hard.
This means that the source of the complexity  is not merely the number of solutions, which could be exponentially many in the size of the system.  In fact, even if there were just a single solution, finding it may take exponential time. Regardless of this complexity, it is possible to prove that \textit{A generic isostatic framework has an even number of realizations}.

This theorem is powerful as it suggests that glasses such as silica \textit{have to}
have more than one realization with the same topology (same set of edges and edge-lengths).
Now, note that the theorem guarantees the existence of such solutions, but
the question of their accessibility depends on the energy considerations. If the rigid bars between vertices are replaced by {\it{springs}} then there is an energy barrier between the various realizations, whose magnitude is relevant in physical process such as tunnelling.
In the next section, we justify this theorem using a toy model and will show
how various realizations of an isostatic framework can be found. To be
more precise~\cite{Hendrickson1992}:

\begin{theorem}
A finite generic isostatic framework is not globally rigid,
but has an even number of equivalent generic frameworks.
Each generic framework of the underlying graph is locally rigid.
(Equivalent generic networks have the same network topology and
bar lengths, and are infinitesimally rigid.)
\end{theorem}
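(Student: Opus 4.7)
The plan is to exhibit a fixed-point-free involution on the set of equivalent generic realizations, so that the elements pair off and the count is automatically even. Fix a graph $G$ with $n$ vertices and $|E|=2n-3$ (the 2D isostatic edge count). Let $\mathcal{C} = \mathbb{R}^{2n}/SE(2)$ be the configuration space modulo orientation-preserving rigid motions, and let the measurement map $\rho \colon \mathcal{C} \to \mathbb{R}^{|E|}$ send $[p]$ to the vector of squared edge lengths $\rho([p])_{ij} = \|p_i-p_j\|^2$. Both sides are real algebraic varieties of dimension $2n-3$, and $\rho$ is polynomial.

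The first step is to use isostaticity to make $\rho$ a local diffeomorphism at generic points. The differential $d\rho_{[p]}$ is, up to a factor of $2$, the rigidity matrix of $(G,p)$; at a generic $[p]$ this is a square matrix of full rank by infinitesimal rigidity, hence invertible. Local rigidity of every equivalent generic framework is then immediate. The inverse function theorem gives that the fiber $\rho^{-1}(\rho([p]))$ is discrete, and since it is a zero-dimensional real algebraic set in the affine chart obtained by pinning down the $SE(2)$ gauge, it is also finite.

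The second step is the reflection involution. Let $\sigma \colon \mathcal{C} \to \mathcal{C}$ be induced by any orientation-reversing isometry of the plane, for instance $(x,y)\mapsto(x,-y)$ applied to every vertex. This descends to a well-defined involution on the $SE(2)$-quotient; since reflections are isometries, $\rho \circ \sigma = \rho$, so $\sigma$ preserves each fiber. A fixed point of $\sigma$ on a fiber would correspond to a configuration $SE(2)$-equivalent to its own mirror image, i.e.\ either mirror-symmetric or collinear, both non-generic. The locus of such configurations is a proper algebraic subset of $\mathcal{C}$, whose image under the polynomial $\rho$ lies in a proper algebraic subvariety of $\mathbb{R}^{|E|}$, so for a generic choice of edge lengths $\sigma$ acts on the fiber without fixed points.

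A fixed-point-free involution on a finite set has even cardinality, which is the main assertion. The ``not globally rigid'' part then follows: the reflected realization $\sigma([p])$ is equivalent to but not $SE(2)$-congruent to $[p]$; strengthening this to non-congruence under the full Euclidean group (when that is the intended notion of global rigidity) can be obtained from Hendrickson's necessary condition that a globally rigid generic framework is redundantly rigid, which fails for isostatic graphs once $n\geq 4$. The main obstacle in this plan is not the algebra but the genericity bookkeeping in the second step: one must argue carefully that generic edge lengths produce fibers lying entirely in the generic configuration locus, so that $\sigma$ really is fixed-point free. Once that technicality is handled, the parity conclusion is just counting orbits of an involution.
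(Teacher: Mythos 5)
There is a genuine gap here, and it sits at the heart of the argument rather than in the ``genericity bookkeeping'' you flag as the main obstacle. Your involution $\sigma$ pairs each realization with its mirror image. Those two realizations are \emph{congruent} under the full Euclidean group, so the pairing establishes evenness only for the count modulo orientation-preserving isometries, and it produces no equivalent framework that is genuinely different from $(G,p)$. In particular it is consistent with the framework being globally rigid in the standard sense (the triangle $K_3$ is isostatic, has two realizations modulo $SE(2)$, and is globally rigid), so the ``not globally rigid'' clause does not follow from your construction. You notice this and propose to repair it by invoking ``Hendrickson's necessary condition that a globally rigid generic framework is redundantly rigid'' --- but that necessary condition \emph{is} the theorem being proved (its contrapositive for isostatic graphs); it is exactly Theorem~5.9 of Hendrickson, which is the source the paper cites for this statement. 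So the substantive half of the claim is assumed rather than proved. A second, related problem: the paper applies the evenness claim to anchored (pinned) frameworks such as the Trihex, where the pinned vertices are placed generically and reflection is not a symmetry of the constraint system at all; the observed even counts ($112$, $104$, $76$, $44$, \dots) cannot be explained by a mirror involution. The reflection argument therefore does not account for the phenomenon the theorem is meant to capture.

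The route the paper relies on (via Hendrickson, and spelled out as Theorem~2 and the single-cut algorithm) is different and does both jobs at once: remove one edge from the isostatic framework to obtain a $1$-dof mechanism whose configuration space, after factoring out trivial motions, is generically a compact smooth $1$-manifold, i.e.\ a disjoint union of closed curves; the length of the removed edge is a smooth function on these curves whose generic level sets therefore have even cardinality, and the original configuration is not at an extremum of this function, so the curve through $p$ contains a second configuration realizing the same edge lengths. One then argues that this second configuration is not congruent to $p$, which is where non-global-rigidity actually comes from. That argument survives pinning (the closed-curve structure does not care whether trivial motions were removed by quotienting or by anchoring), whereas the reflection involution does not. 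If you want to salvage your approach, the involution can only ever deliver the parity statement in the $SE(2)$-quotient for unpinned frameworks; you would still need the closed-curve argument, or an equivalent, for everything else.
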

\begin{proof}
This is essentially Theorem 5.9 from~\cite{Hendrickson1992}.
The evenness property
is not explicitly stated there, but is clear in the
proof. Evenness is explicitly stated in the proof of Theorem 1.14 from~\cite{Gortler}.
\end{proof}

However, this theorem does not provide a way to access  the solutions. Each realization
has exactly the same number of vertices and the same connectivity table,
and bond lengths, however the
embedding of the graph is different. These configurations are not related by rigid motions
such as translation and/or rotation.
An approach can be designed using the nature of an isostatic framework
which is on the verge of instability. The number of zero eigenvalues
of the dynamical matrix of an isostatic framework is exactly equal to the number
of trivial motions (or dimension of the null space). Any other motion has a finite cost in energy, if vertices are connected by springs, rather than bars which of course suppress any continuous deformations in an isostatic system.
But if a single constraint of an isostatic framework is removed, there are one fewer equations of the form Equation \ref{eq:edgelens}, so now the null space gains one extra dimension
moving along which has zero energy cost. In fact, it can be proven that the traversal
along this non-trivial eigenvector is continuous and leads to an even number of
realizations with the same length on the removed bar.
We state the above observations as a theorem.

\begin{theorem}
\label{singlecut}
If a single edge is removed from a finite generic isostatic  framework,
the resulting mechanism has a configuration space that is a closed,
continuous curve, on which there are an even number of  configurations
in which the removed edge returns to its original length.
\end{theorem}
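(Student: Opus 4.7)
The plan is to identify the configuration space of the one-edge-removed mechanism as a compact, smooth, one-dimensional manifold without boundary, so that the component containing the original realization is a circle, and then extract evenness from an elementary parity argument. Let $G$ be the generic isostatic graph on $n$ vertices, let $e^{*}=ij$ be the removed edge, let $G' = G - e^{*}$ (with $2n-4$ edges), and let $\ell$ denote the original edge lengths. After quotienting out the three-dimensional group of planar Euclidean motions (e.g.\ by pinning one vertex at the origin, a neighbor on the positive $x$-axis, and fixing an orientation), define the mechanism configuration space $\mathcal{C}$ as the set of assignments $p$ satisfying the $2n-4$ equations of the form \eqref{eq:edgelens} for the edges of $G'$; dimensionally it is $(2n-3)-(2n-4)=1$.

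First I would show $\mathcal{C}$ is compact. Any generically isostatic graph in the plane is $2$-edge-connected (a cut edge would allow rotation of one side about an endpoint, contradicting rigidity), so $G'$ remains connected, every vertex lies within bounded graph distance of the pinned vertices, and the triangle inequality yields a uniform coordinate bound; combined with $\mathcal{C}$ being algebraic, this gives compactness. Next I would show that the component $C_{0}$ containing the starting realization is a smooth $1$-manifold. By Theorem 1 every framework equivalent to the original is infinitesimally rigid, so the $(2n-3)$-row rigidity matrix of $G$ has full rank there; since $e^{*}$ is independent in the generic plane rigidity matroid of $G$, deleting its row preserves rank $2n-4$, so the Jacobian of the mechanism's constraints is surjective at each of those configurations. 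For the remaining points of $C_{0}$ the generic choice of $\ell$ ensures $C_{0}$ avoids the proper algebraic subvariety where the rigidity matrix of $G'$ drops rank. A compact, smooth, boundaryless $1$-manifold is a disjoint union of circles, so $C_{0}$ is a closed continuous curve.

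Finally I would invoke a parity argument on the smooth function $f : C_{0} \to \mathbb{R}$ defined by $f(p) = |p_{i}-p_{j}|^{2}$. The desired realizations are exactly $f^{-1}(\ell(e^{*})^{2})$; by genericity this is a regular value of $f$, so the preimage is finite, and traversing the circle $C_{0}$ once returns $f$ to its starting value, forcing the signed count of crossings of that level to vanish and hence the unsigned count to be even.

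The step I expect to be the main obstacle is not the parity argument itself but the smoothness and circularity of $C_{0}$: ruling out singular configurations where the rigidity matrix of $G'$ drops rank below $2n-4$. Such singularities could glue two branches at a self-crossing or produce cusps, breaking the clean $S^{1}$ topology, and handling them rigorously requires a careful genericity argument showing that the original length vector $\ell$ misses the discriminant locus where $G'$ acquires extra infinitesimal flexibility. Once that is secured, the remainder of the proof is essentially formal.
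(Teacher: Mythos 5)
Your proposal is correct and follows essentially the same route as the paper, whose own proof is simply a citation of Theorems 5.8 and 5.9 of Hendrickson (1992): Theorem 5.8 there is precisely your claim that the fiber of the edge-measurement map of $G-e^{*}$ over a generic configuration is a compact smooth one-manifold (a disjoint union of circles), and Theorem 5.9 is your parity count of level crossings of the removed edge's length function around the circle. The genericity step you flag as the main obstacle is exactly what Hendrickson's Theorem 5.8 supplies; it is closed by observing that the critical values of the measurement map form a closed, lower-dimensional semialgebraic set defined over the rationals, which the measurement vector of a configuration with algebraically independent coordinates cannot lie in.
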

\begin{proof}
See Theorems 5.8 and  5.9 in~\cite{Hendrickson1992}.
\end{proof}

\section{Trihex: A Toy Model}
Glasses like silica (SiO$_2$) and germenia (GeO$_2$) are considered as a network of
corner-sharing tetrahedra. Recently, silica bilayers have been
synthesized~\cite{Huang2012,lichtenstein2012atomic} which are effectively
a two-dimensional network of corner-sharing triangles~\cite{sadjadi17refining}.
These triangles are formed with oxygens at their corners. The network has rings of many
sizes but the mean ring size is six~\cite{Sadjadi2016}.
Therefore, we propose a hexagon as a toy model: Trihex (Fig.~\ref{fig:hexagon}).

\begin{figure}[t]
\centering
\includegraphics[width=8cm]{./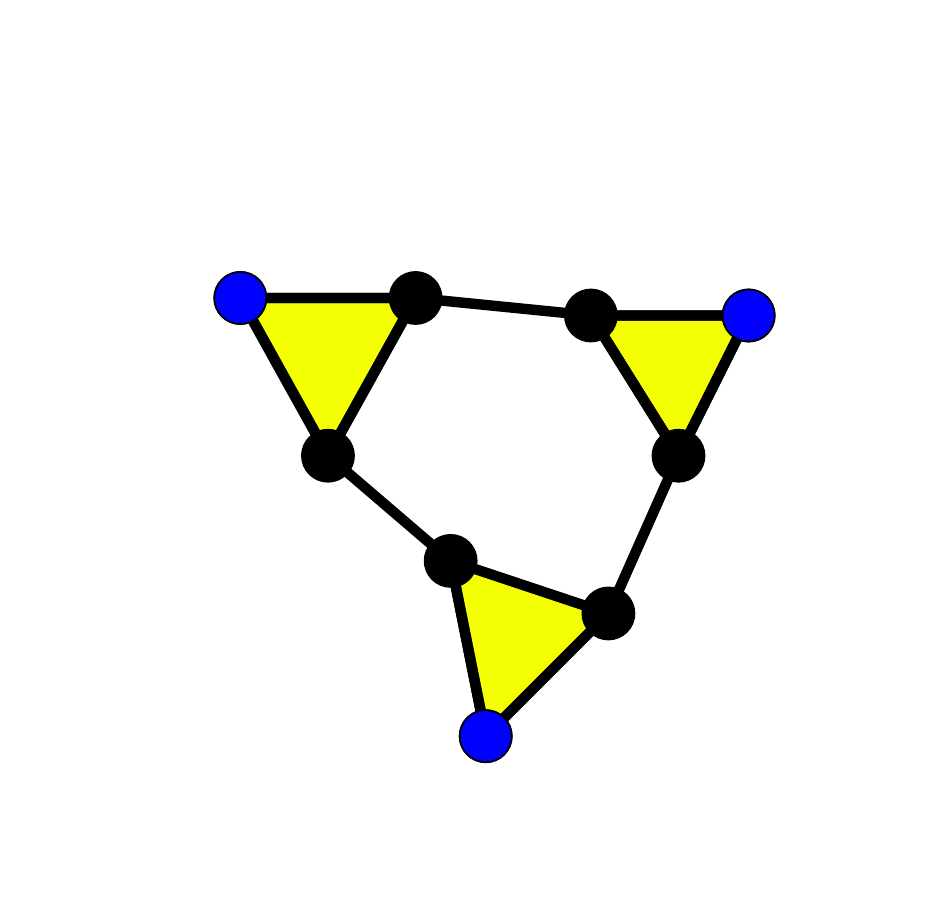}
\caption{\footnotesize The toy model, Trihex, formed by six corner-sharing triangles
proposed to find properties of realizations of an isostatic network. The other three non-pinned triangles are removed since each triangle is rigid.}
\label{fig:hexagon}
\end{figure}
It has $V=9$ vertices and $E=12$ edges. To render Trihex isostatic, anchored
boundary condition are used and the blue vertices are pinned
(immobilized)~\cite{theran2015anchored}. The pinned vertices are placed
generically (\textit{not on an equilateral triangle}) and all edges initially have almost, but not exactly, the same length. The other three vertices on the
surface are removed since they do not change the rigidity of the network
as each triangle is rigid.
 Since the set $V$  has six unpinned vertices, this gives a set of $2N = 2*6 = 12$
non-linear equations for Trihex to solve using three pinned vertices whose coordinates are fixed.
It is important to note that simple ruler and compass based  algorithms, that classify and find equivalent frameworks when the underlying graphs are so-called tree-decomposable or quadratically solvable,   \cite{fudos1997graph, owen1991algebraic, owen2007nonsolvability, joan-arinyo2004revisiting} do not directly apply to Trihex.

\section{The Single-cut algorithm}
 Theorem \ref{singlecut} can be directly written as a step-by-step \textit{single-cut}
algorithm~\cite{HC14}:
\begin{enumerate}
  \item Start from an isostatic network, {\it i.e.} a rigid network with
  no redundant edge. The number of trivial motions depend on the imposed boundary
  conditions. In a system with periodic boundary conditions, only rigid translations
  are allowed. For anchored boundary condition, no trivial motion exists, and there are exactly $2N$ equations of the form Equation \ref{eq:edgelens}.
  \item Remove an edge from the isostatic network, resulting in a system of  $2N-1$ equations and form its dynamical matrix.
  Find the eigenvectors corresponding to zero eigenvalues (the null space). Remove
  trivial motion eigenvectors to find the one internal degree of freedom (\emph{dof}).
  \item \textit{Eigenvector-following}: Once the non-trivial direction
  is identified, move all sites along that direction with a small step size. The smaller
  the step size, the smaller is the error in traversing the closed curve in configuration space, \textit{i.e.} the path
  that the system takes in high dimensional space. Note that this motion does not
  change the edge length of any other edge.
  Also use the dot product of the previous and current directions to make sure we
  only move forward in the configuration space.
  \item Compute the dynamical matrix at the new point and repeat the above process
  to traverse in the configuration space. Meanwhile monitor the distance between
  the two vertices that had their connecting edge removed. If we continuously move through
  this one-dimensional path, we eventually come back to the starting point. Once
  we are back to the initial point, the sum of distances from the center of mass (as a convenient metric)
  is plotted against the length of the cut edge, for each point along the path.
  This gives us a closed curve projected in 2D plane in which drawing a vertical line
  will identify the original framework and its equivalent frameworks in the configuration space.
\end{enumerate}
A Python implementation of this algorithm can be found in the RigidPy package~\cite{rigidpy}. The single cut algorithm is illustrated in the two parts of Figure~\ref{fig:circuits}.

\begin{figure}[t]
  \centering
  {\includegraphics[width=7.5cm]{./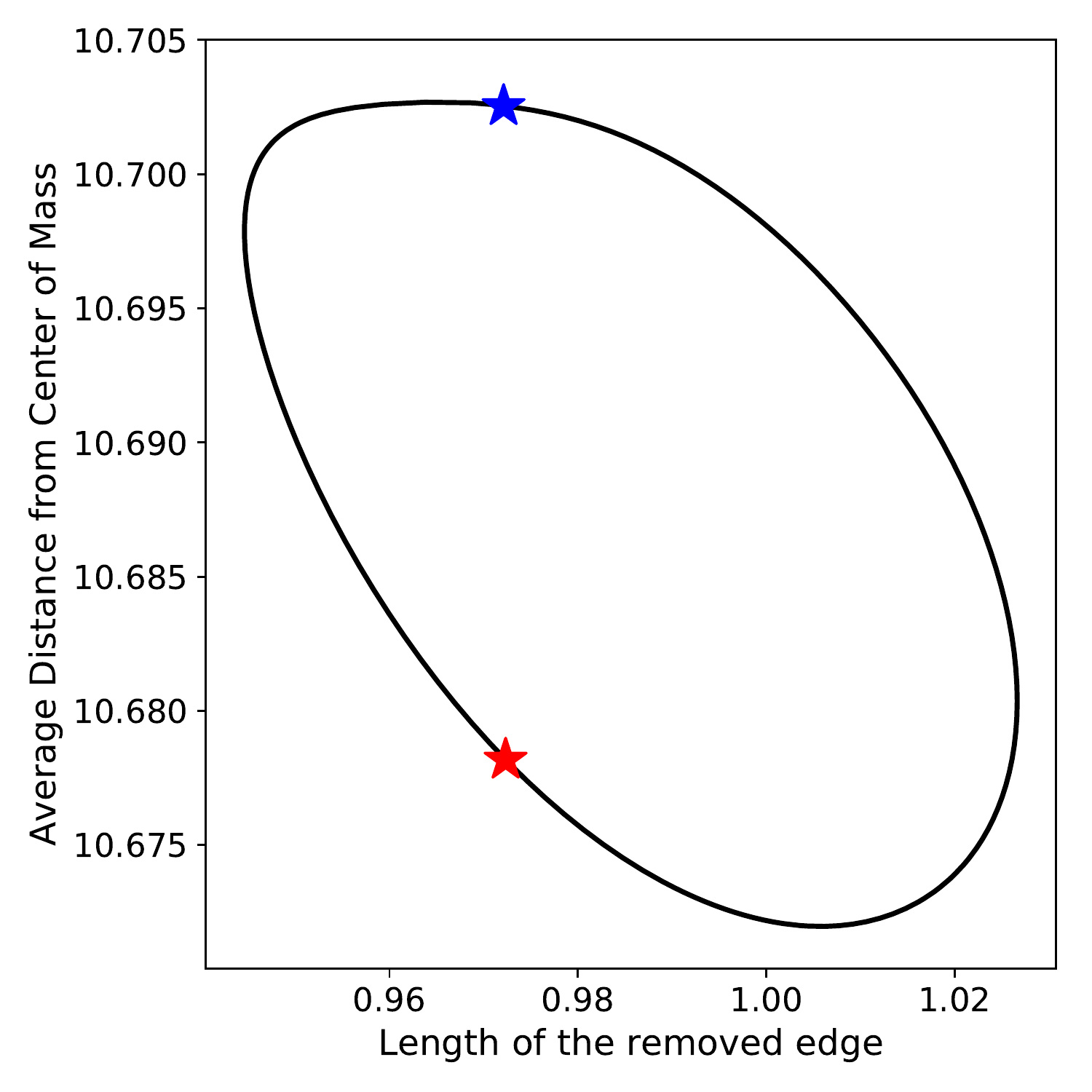}}
  {\includegraphics[width=7.5cm]{./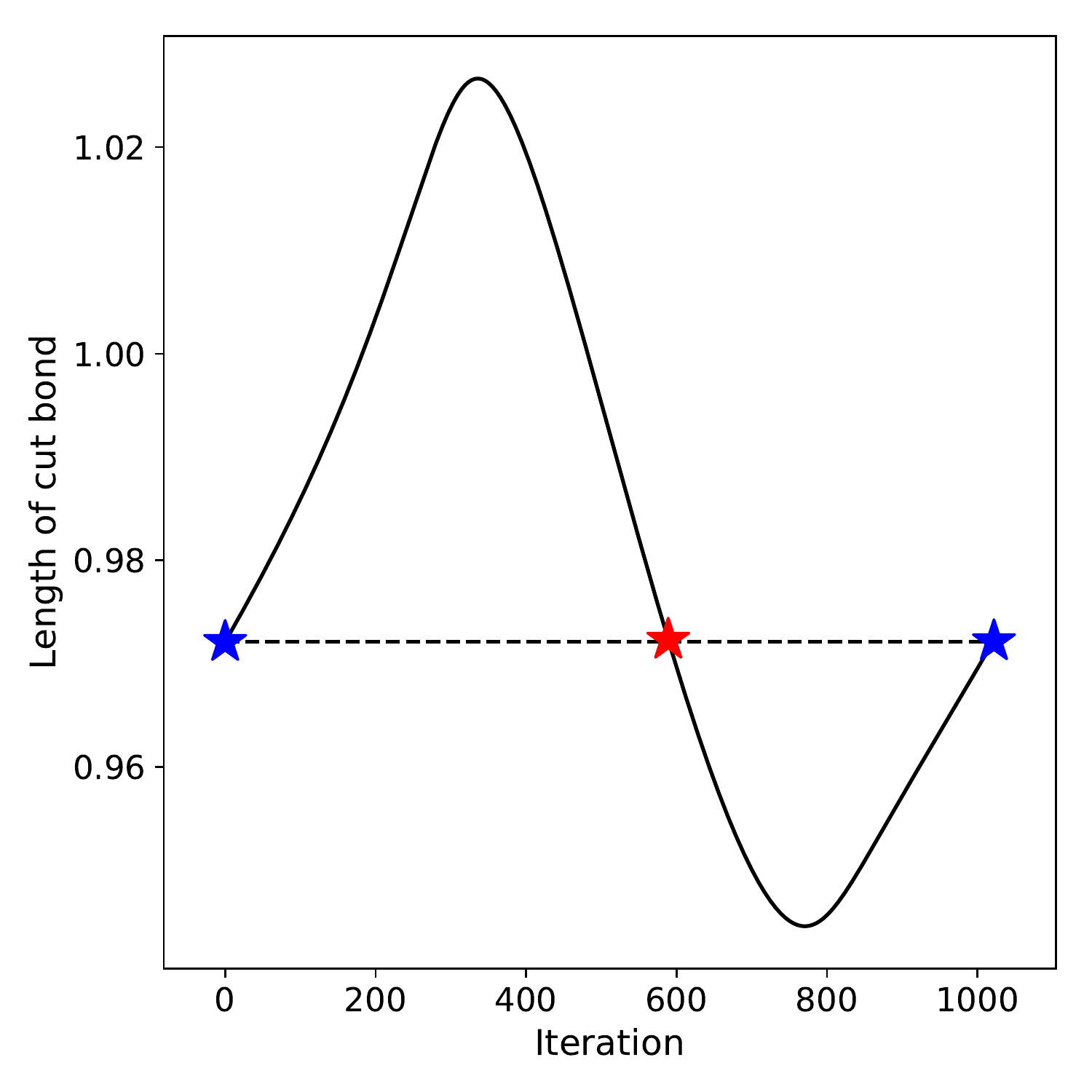}}
  \caption{\footnotesize (Top) A closed curve - whose points represent equivalent frameworks or configurations of a 1-dof framework - projected on the 2D plane; The vertical
  axis represents the average distance of all vertices from the center of mass.
  The horizontal axis shows the distance between two ends of the removed edge. The blue
  and the red asterisks denote the original and alternative realizations,  i.e. equivalent frameworks.
  Note that more than 2 (but an even number) of such equivalent frameworks can be found on a single closed curve (see Section on CayMos).
  A vertical line, drawn at the location of the original bond length, has two
  intersections with the  closed curve representing equivalent frameworks.
  (Bottom) The distance of two ends of the removed edge vs. iteration step
  by moving along the path. The dashed horizontal line represents the original length.
  The asterisks correspond to the ones on the left.}
  \label{fig:circuits}
\end{figure}



\subsection{Using the single-cut algorithm to find equivalent sphere packings}

We tested the ability of the single-cut algorithm to find equivalent configurations of frameworks using a large database of known framework embeddings which were obtained from rigid unit sphere packings of $N=12,13$ spheres \cite{HC14}. For $N=12$, the database in \cite{HC14} contains 11,980 distinct unit sphere packings, of which there are 23 pairs with the same adjacency matrix and hence the same underlying framework graph. We applied the single-cut algorithm to each of the 46 frameworks with multiple embeddings, breaking each single bond in turn. For all frameworks the algorithm found the other embedding, usually via several different single broken bonds.
For $N=13$, the database contains 98,529 unit sphere packings of which there are 474 pairs with the same adjacency matrix. We tested all the frameworks with multiple embeddings, and found 4 pairs of frameworks that couldn't reach their other embedding by the single-cut algorithm. Interestingly, an additional two pairs (4 frameworks) each led to new frameworks, that the algorithm in \cite{HC14} failed to find.
A pair of frameworks that cannot be converted to each other via the single-cut algorithm is shown in Figure~\ref{fig:clusters}. These atomic clusters are three dimensional and indeed Theorems 1 and 2 and the single cut algorithm apply in any dimension. The remaining examples in this paper are in two dimensions.

\begin{figure}[ht]
\centering
\includegraphics[width=0.9\linewidth]{./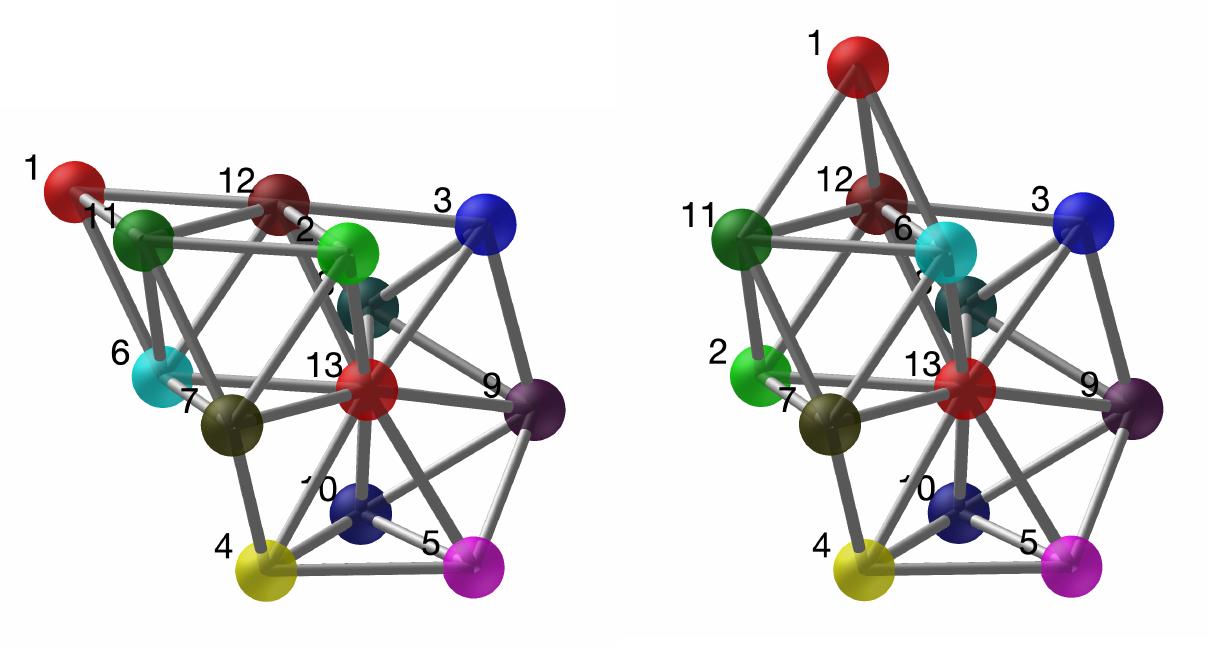}
\caption{A pair of frameworks with the same underlying graph but distinct spatial embeddings, that cannot be converted to each other via the single cut algorithm. The embeddings differ in the locations of vertices 2 (green), 6 (cyan), and 1 (red). By breaking a single bond and flexing, there is no way to interchange the positions of vertices 2 and 6; more flexibility is required to interconvert the embeddings.}
\label{fig:clusters}
\end{figure}

\subsection{Double cut}
In order to find other realizations in this system, we designed more complex
schemes for removing edges. For example, the single-bond cut can be modified
to cutting two bonds while another bond is added. We tried different ways in
connecting different sites but our analysis shows that these more complex schemes
do not find any new realizations. We do not find new solutions
in these more complex schemes, but even more complex schemes can be imagined that perhaps might do so.

\subsection{An Alternative to address Incompleteness of single cut}
It is helpful to look at the configurations of the Trihex where the anchored vertices are close to being a unit-length equilateral triangle and edge lengths are larger that $1/3$, and smaller that about $1/2$. Otherwise, either the configuration or framework does not exist, or it has several self-intersections which do not seem physically realistic.  As we will see, this allows for a great variation in the shapes of  equivalent frameworks.

First we consider the non-generic case when the anchored triangle is exactly equilateral, which leads to some flexible frameworks, although generic frameworks of the same graph are in fact isostatic. Subsequently, we will consider generic perturbations of an anchored triangle. In Figure \ref{fig:Clock+}, sample  equivalent frameworks are displayed in the form of the numbers on a clock, where each hour represents a particular configuration and there is a natural flex of frameworks from each hour to the next, forward or backward. The central hexagon in each of the flexible configurations have the property that opposite sides are parallel.  All the dark edges have unit length.  The two equivalent configurations in the center are rigid isostatic, and also infinitesimally rigid configurations that have $3$-fold rotational symmetry, which we call $3$-fold left and $3$-fold right.  They are not part of the flexible cycle of configurations on the outside clock, but they also represent realizations of the outside graph with the same edge lengths and a connection is shown for each of these to the particular odd hour configuration that share two of the three triangle configurations that are the same.  The flex is seen by starting from the $12$-o-clock configuration, and using that the opposite sides of the hexagon are parallel it maintains that property as it flexes.  This is proved below.  If you look at the path of $3$ edges from one vertex to the next, it is an example of a $4$-bar mechanism, and  in the clock flex, each of the $3$~$4$-bar mechanisms flex their full cycle, and when the edges of the pinned triangle are greater than the length of the inner edge length, the  $4$-bar mechanisms consist of one connected component.

\begin{figure}[b]
\centering
\includegraphics[width=8cm]{./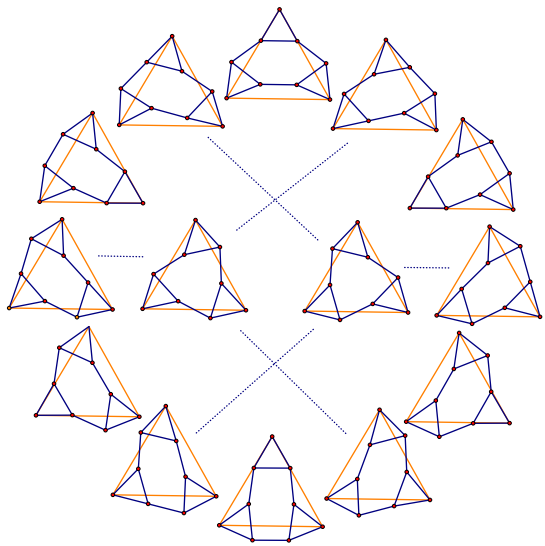}
\caption{\footnotesize The closed path of the flexing trihex for the case when the pinned vertices form and equilateral triangle shown in red.  The positions of the sample configuration are represented by an hour on a clock.  The central two $3$-fold symmetric configurations are rigid and not part of the flex.}
\label{fig:Clock+}
\end{figure}

To prove that the clock mechanism works see Figure \ref{fig:Vectors}.  We use the $1$-o-clock position as a sample.  The more general position is very similar.  We see that the vector sum $A+B+C=D$ represents the vectors of the $4$-bar mechanism, where $D$ is the corresponding side of the regular pinned triangle.  Let $R$ be the rotation to the right by $120^{\circ}$. Then $RC+RA+RB=R(C+A+B)=RD$ is the other side of the pinned triangle. Applying this to other side we can see how things close up.

\begin{figure}[b]
\centering
\includegraphics[width=6cm]{./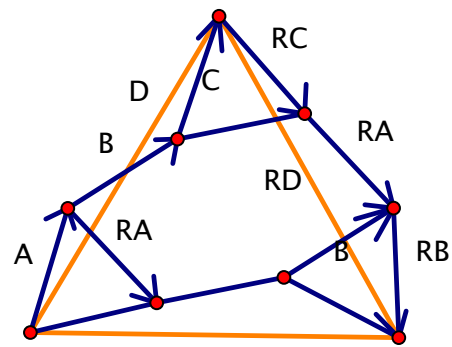}
\caption{\footnotesize This labels the directed edges of mechanism for the proof that the clock mechanism is a mechanism.}
\label{fig:Vectors}
\end{figure}

\begin{figure}[b]
\centering
\includegraphics[width=8cm]{./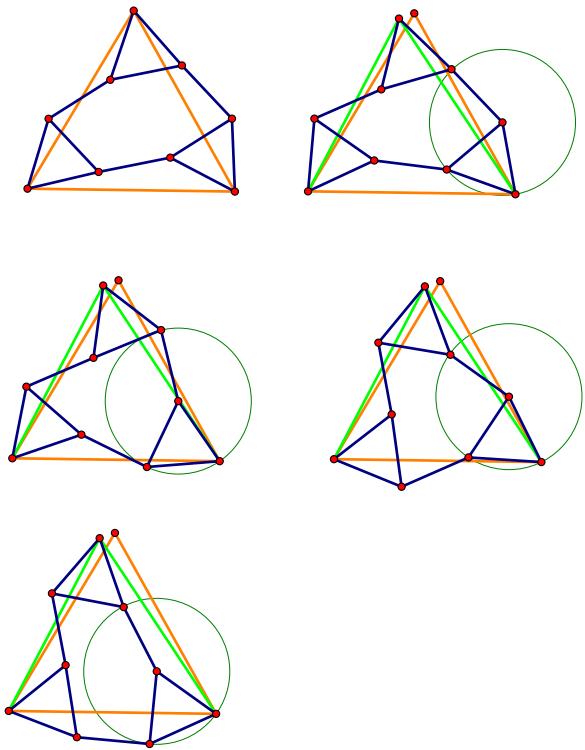}
\caption{\footnotesize Starting with the framework in the top left, which is near the $1$-o-clock flexible configuration, we perturb the starting trihexagon to get the framework on the upper right.  We then find $3$ others as shown.  The thin circle is to indicate that the last bar length is of unit length since the other bar lengths are of unit length by construction.}
\label{fig:flexings-1-o-clock}
\end{figure}

We next show how this applies when the pinned triangle is perturbed to a non-equilateral, generic triangle, while the other other unit bar lengths are fixed.  The two $3$-fold configurations are infinitesimally rigid, and the whole framework is isostatic with $9$ vertices and $15= 2\cdot 9 -3$ bars.  So it has only the $0$ equilibrium (self) stress.     On the other hand, for perturbations of configurations of the clock mechanism it is not so simple.  Indeed, there is an equilibrium stress that varies as the configuration flexes, and in a sense, the stress ``blocks'' some of the infinitesimal and actual motions.

For a perturbation of the pinned triangle, we may assume that  one of the edges of the corresponding equilateral triangle is the same length, and each of the other two edge lengths either increase or decrease some small amount.  If the stress on those lengths has the same sign as the displacement of the edge lengths, that motion is restricted.  But in any case, we can start with one of the configurations along the path of the clock mechanism, and then look for another realization with the same perturbed edge lengths of the pinned triangle.  Figure \ref{fig:flexings-1-o-clock} is an example of that process.  Then fixing the new green non-equilateral base triangle, we can then find two other realizations with unit bar lengths on the dark colored bars which are perturbations of the left and right $3$-fold.

When the upper left configuration is flexed $6$ hours, then we can approximate that as well to get an exact trihex with the same triangle base as in the upper right.  Thus we get $4$ configurations with the same base altogether.

The above discussion indicates that there are various other equivalent frameworks  than obtainable by removing an edge and flexing the resulting framework.  Since there is a finite mechanism nearby, or a critical configuration nearby, with generic equilateral triangle of boundary vertices, that can be used as a kind of guide path, walking around the clock to find distant realizations with the same bar lengths. The idea is to approximate the given framework with one with the nearest time on the clock.  For example the framework in Figure \ref{fig:hexagon} is roughly at $7$-o-clock, when the clock in Figure \ref{fig:Clock+} is rotated $180^{\circ}$ to match the position of the base triangle. In fact, that is not all.  It is possible to jump to both of the left or right folds rather that walk around the clock and use that as the approximation.

In the next section, we present a formal method that fleshes out these ideas.

\section{Indexing and finding equivalent frameworks using Cayley Parameters}

The methods mentioned above use two ideas. The first is  finding equivalent frameworks by removing one or more edges, and exploring the configuration space of the resulting mechanism with one or more degrees of freedom, to find other configurations that satisfy the original edge lengths of the removed edges, i.e., equivalent frameworks. The second is to classify or index the configurations based on their ``clock'' position around critical configurations.

These ideas are exploited in a formal method that has been used in a variety of scenarios from computer aided engineering design to molecular modeling.
Cayley parameterization (a type of \emph{covering map or projection} in the terminology of algebraic topology)  was introduced in \cite{SiGa:2010}, as a way of describing and computing configuration spaces of flexible frameworks. The key idea is to use lengths of selected non-edges as parameters or coordinates, to represent and traverse the configuration space that could be disconnected in the usual Cartesian coordinates (a \emph{branched covering space} in the terminology of algebraic topology). With sufficiently many judiciously chosen Cayley parameters or non-edges whose addition makes the framework minimally rigid or isostatic, we can  efficiently compute the finitely many possible Cartesian configurations  (inverse of the covering map) corresponding to each Cayley-parameterized configuration (an element of the \emph{base space} of the covering projection). A bijection between Cayley configurations and frameworks is achieved by adding enough Cayley parameters, i.e. enough non-edges so that the graph is globally rigid, i.e. has a unique realization generically, given edge lengths.

Here we describe two algorithms based on Cayley parameterization for finding all equivalent frameworks of a given framework, or graph with fixed edge-lengths.

The first is based on results in  \cite{sitharam2014beast, sitharam2011cayley1, sitharam2011cayley2, wang2015caymos}, which provided an analysis of a  common class of 1-dof mechanisms  that are obtained  by removing an edge from  the well-studied class of tree-decomposable  graphs, which include the so-called Henneberg-I graphs \cite{gao2009henneberg}.

This class of graphs provides a natural classification or indexing of equivalent frameworks based on relative orientations, chiralities, or \emph{flips} of certain triples of vertices. A flip vector of 1's and -1's distinguishes equivalent isostatic frameworks.
Two equivalent isostatic frameworks whose flip vectors differ in a single coordinate  are on ``opposite sides'' of a critical configuration of  a 1-dof framework where the triple of vertices is collinear. The isostatic framework minus a ``base'' edge yield the 1-dof framework, and
the Cayley configuration space of this 1-dof framework is parameterized by the length of the removed base edge, i.e.  base non-edge. This Cayley configuration space has a well-defined structure of intervals bounded by critical points. Each interval could  correspond to multiple connected component curves that are generically smooth and are homeomorphic to a circle. Certain pairs of flip vectors are guaranteed to belong to different components, while others could belong to the same component.
Moreover, once the unique flip vector is given, the  isostatic framework with the specified length of the base non-edge or Cayley parameter can be found easily using a simple ruler and compass construction.
The algorithm has been implemented as an opensource software CayMos.

Trihex becomes a 1-dof tree-decomposable graph after removing one edge $e$, with a Cayley configuration space parameterized by the length of a  \emph{different} base non-edge $f$, whose addition makes the graph tree-decomposable.
As the different connected component curves of the  configuration space are traced out for the different lengths of $f$,  multiple equivalent frameworks, indexed by  multiple flip vectors, that attain the required length for $e$ are found.  A webpage  \cite{caymostrihexwebpage} illustrates CayMos analysis of the Trihex for various ratios between the pinned boundary edge length and the bond length. The two component curves of a  Trihex with bond length  half the pinned boundary edge length  are shown in Figure \ref{fig:caymoscomponents}. Each component curve could have multiple equivalent frameworks with  different flip vectors, i.e. frameworks that attain the required distance of the dropped edge.

\begin{figure}
    \begin{subfigure}[t]{\linewidth}
    \centering
    \includegraphics[width=\linewidth, valign=t]{./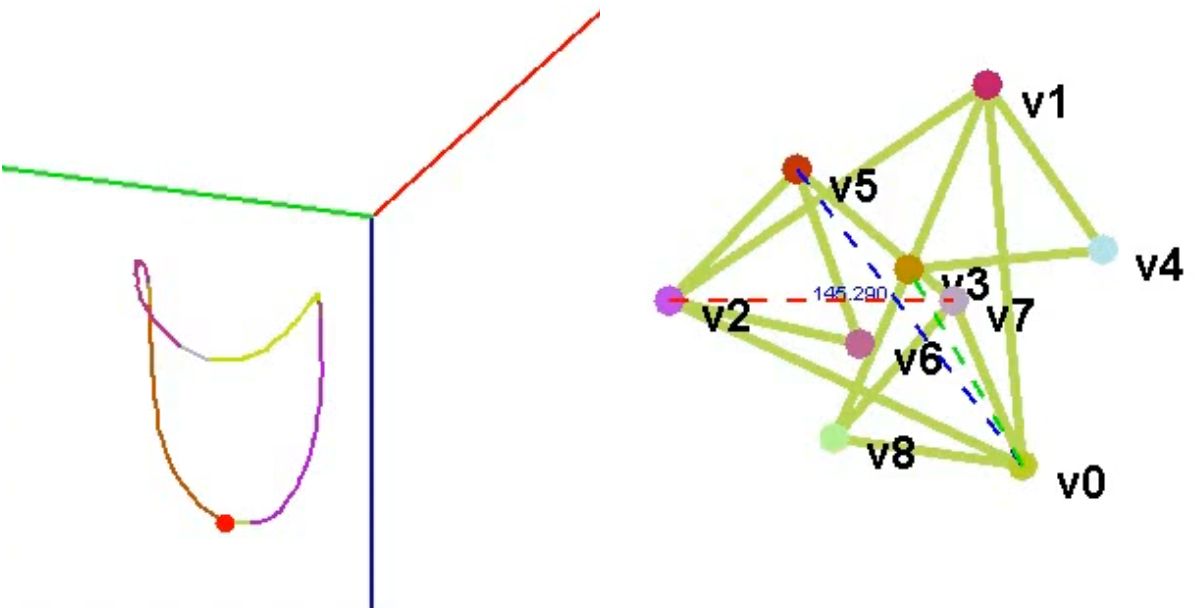}
  \caption{\footnotesize Component curve 1, contains all flip vectors with anticlockwise orientation of the triple $v0,v7,v8$.}
    \label{fig:component1}
    \end{subfigure}
    \vspace{2em}
    \begin{subfigure}[t]{\linewidth}
    \centering
    \includegraphics[width=\linewidth, valign=t]{./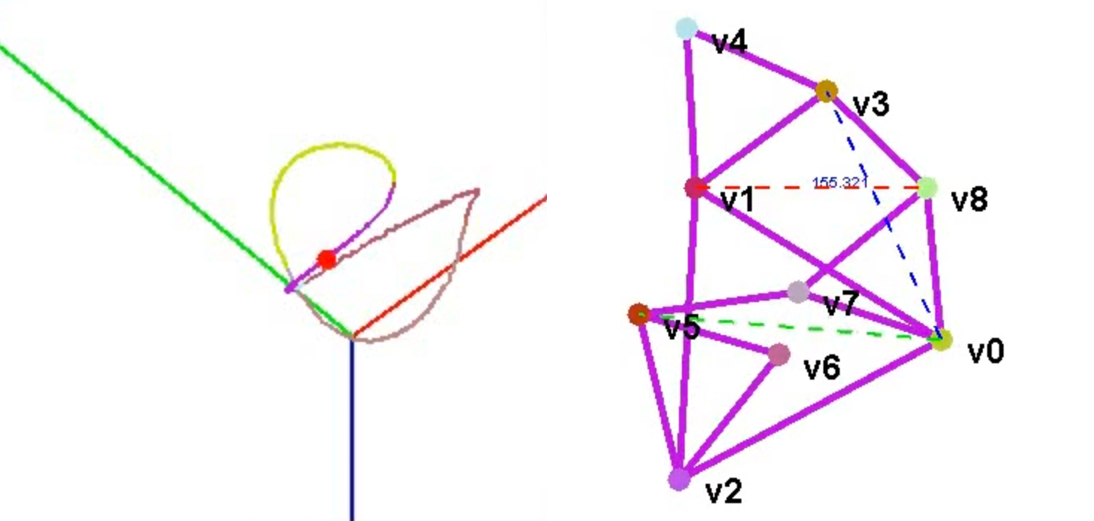}
 \caption{\footnotesize Componentcurve 2, contains all flip vectors with clockwise orientation of the triple $v0,v7,v8$}
    \label{fig:component2}
    \end{subfigure}
    \caption{\footnotesize Trihex minus one edge 1-dof tree decomposable graph's Cayley configuration space curves for bond length half of boundary edge length, solved by CayMos. The dropped edge $(v4,v6)$ attains its required length at multiple points on each curve, giving multiple equivalent frameworks. See text for description}
    \label{fig:caymoscomponents}
\end{figure}

The dropped edge $e$ is $(v4,v6)$.  Each component curve is a projection of a smooth simple curve (each point represents a unique configuration) living in 3D, parameterized by 3 Cayley parameters, dashed non-edges whose addition makes the graph globally rigid, i.e., generically has a unique realization given edge lengths. The  driving Cayley parameter or base non-edge is one of the 3 dashed. Each colored portion of the curve represents a different flip vector. Each component curve has multiple equivalent frameworks with different flip vectors, i.e. frameworks that attain the required distance of the dropped edge.

Figure \ref{fig:caymos-solution} gives equivalent frameworks for Trihex
found by CayMos, for two different ratios between the edge length of the pinned (nearly) equilateral boundary triangle and the (equal) length of the remaining edges.

\begin{figure}
    \begin{subfigure}[t]{0.45\linewidth}
    \centering
    \includegraphics[width=\linewidth, valign=t]{./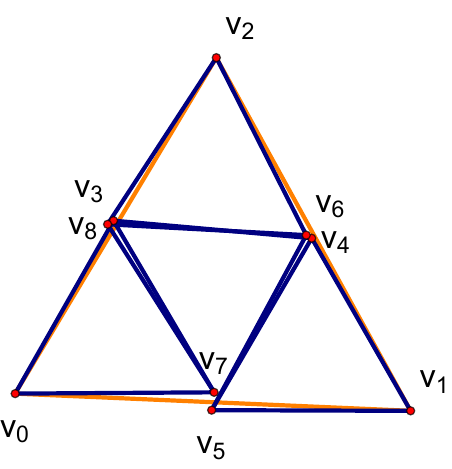}
    \caption{\footnotesize When the edge length is \(0.503\) and CayMos finds 22 realizations, see table.}
    \label{fig:trihex1}
    \end{subfigure}
    \vspace{2em}
    \begin{subfigure}[t]{0.45\linewidth}
    \centering
    \includegraphics[width=\linewidth, valign=t]{./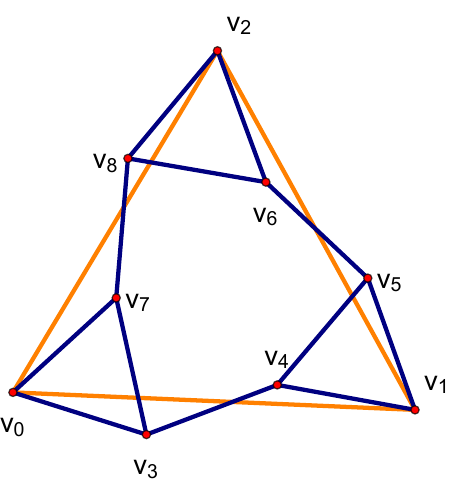}
    \caption{\footnotesize When the edge length \(0.348\) and CayMos finds 4 realizations, see table}
    \label{fig:trihex2}
    \end{subfigure}
    \caption{\footnotesize Trihex frameworks for two different boundary-to-edge-length ratios, solved by CayMos.}
    \label{fig:caymos-solution}
\end{figure}

\begin{table}
\label{tab:trihex1_sol}
\caption{\footnotesize 22 solutions found for trihex \ref{fig:trihex1}}
\begin{tabular}{c|c|c}
    \hline
    Flip & Base Edge & Solutions \\ \hline
    \(-\)& \( \left( v_2,v_5 \right) \) &
    \begin{tabular}{cc}
        \includegraphics[width=0.25\linewidth]{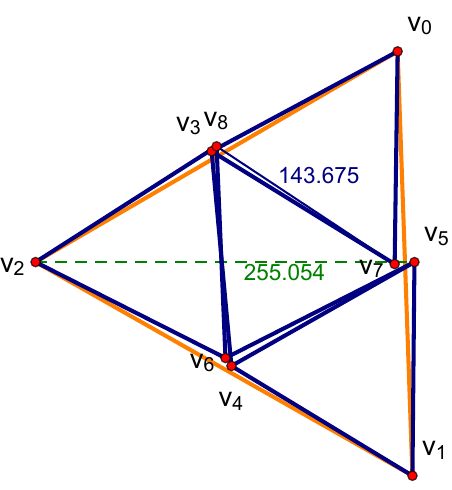} &
        \includegraphics[width=0.25\linewidth]{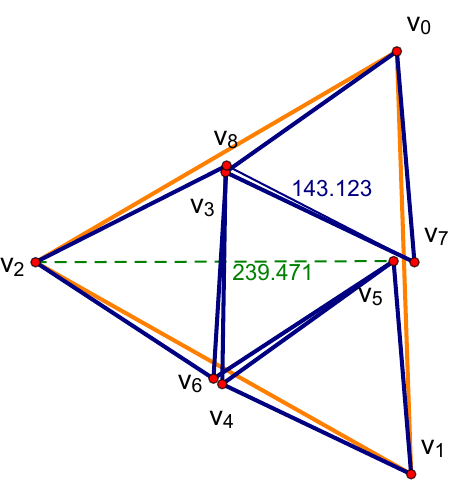}
    \end{tabular} \\ \hline
    \( 4 \)& \( \left( v_2,v_5 \right) \) &
    \begin{tabular}{cc}
        \includegraphics[width=0.18\linewidth]{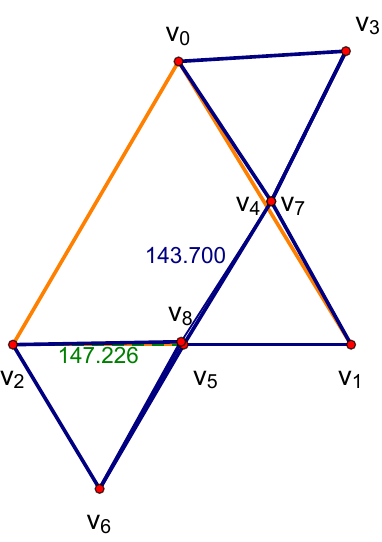} &
        \includegraphics[width=0.25\linewidth]{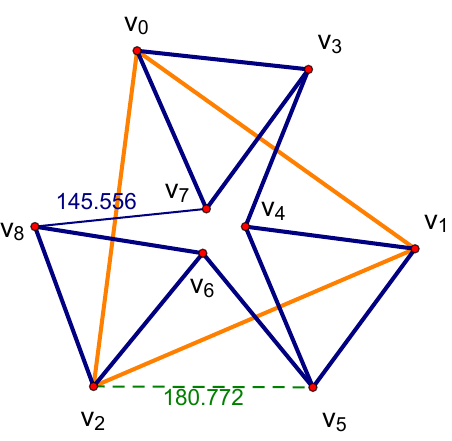}
    \end{tabular} \\ \hline
    \( 7 \) & \( \left( v_2,v_5 \right) \) &
    \begin{tabular}{cc}
        \includegraphics[width=0.22\linewidth]{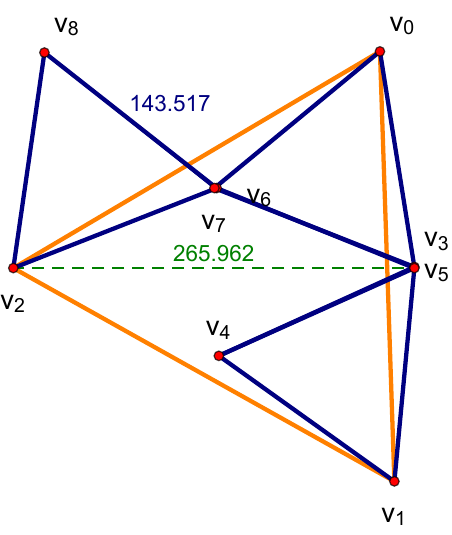} &
        \includegraphics[width=0.22\linewidth]{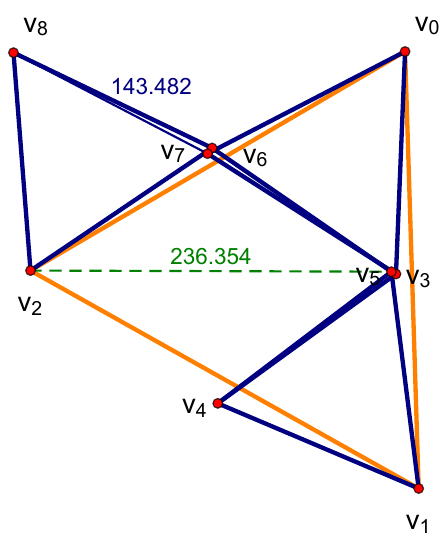}
    \end{tabular} \\ \hline
    \( 8 \) & \( \left( v_2,v_5 \right) \) &
    \begin{tabular}{cc}
        \includegraphics[width=0.25\linewidth]{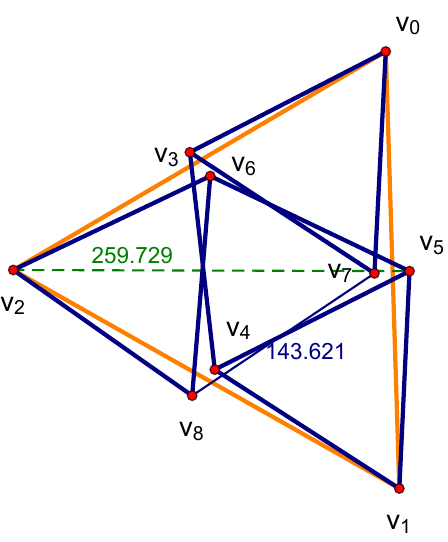} &
        \includegraphics[width=0.25\linewidth]{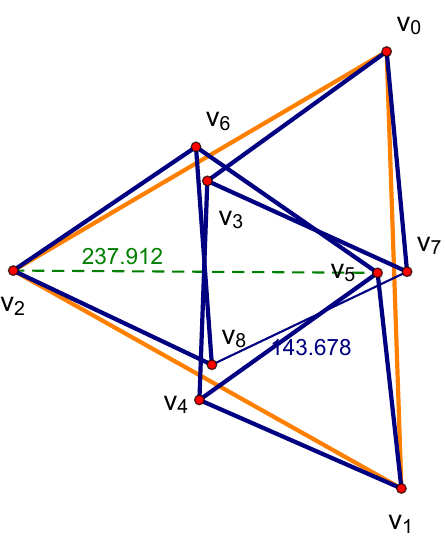}
    \end{tabular} \\ \hline
    \( 3,\,4 \) & \( \left( v_2,v_5 \right) \) &
    \begin{tabular}{cc}
        \includegraphics[width=0.25\linewidth]{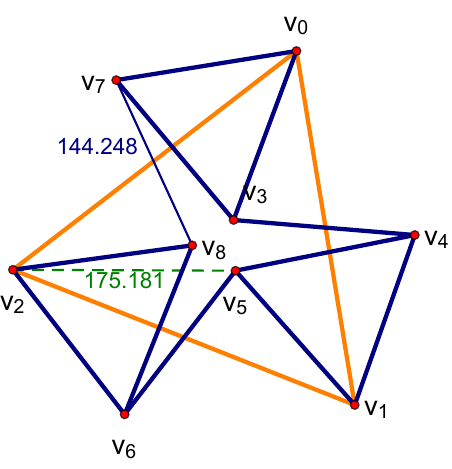} &
        \includegraphics[width=0.25\linewidth]{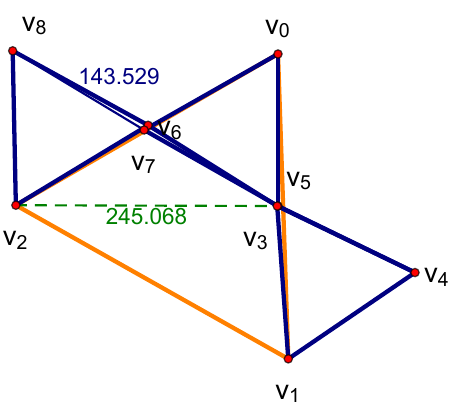} \\
        \includegraphics[width=0.25\linewidth]{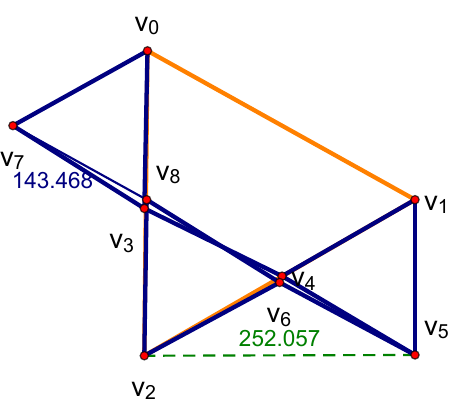} &
        \includegraphics[width=0.25\linewidth]{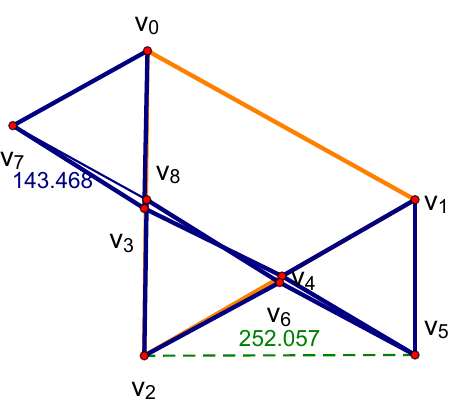}
    \end{tabular} \\ \hline
    \( 7,\,8 \) & \( \left( v_0,v_4 \right) \) &
    \begin{tabular}{cc}
       \includegraphics[width=0.25\linewidth]{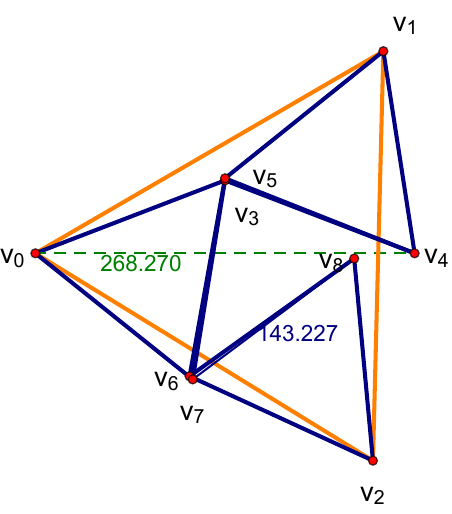} &
       \includegraphics[width=0.25\linewidth]{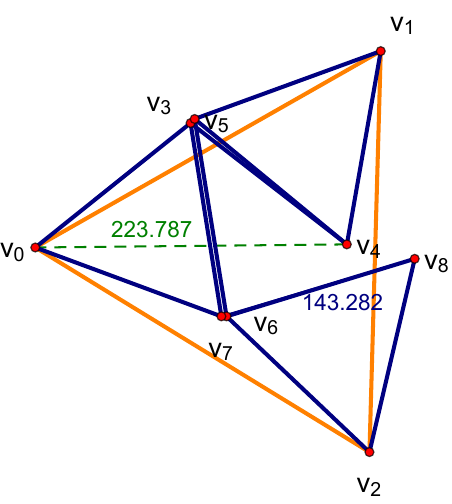}
    \end{tabular} \\ \hline
    \( 3,\,4,\,7 \) & \( \left( v_0,v_4 \right) \) &
    \begin{tabular}{ccc}
        \includegraphics[width=0.20\linewidth]{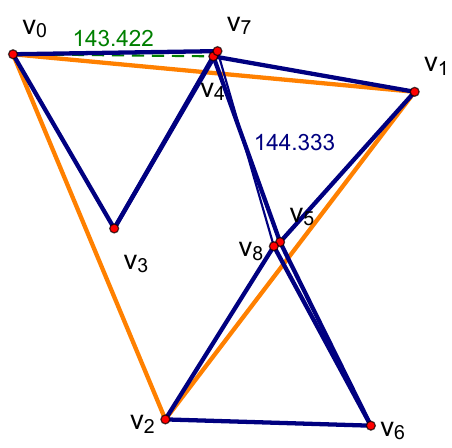} &
        \includegraphics[width=0.20\linewidth]{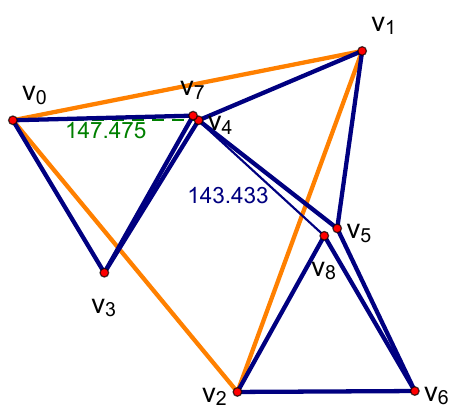} &
        \includegraphics[width=0.20\linewidth]{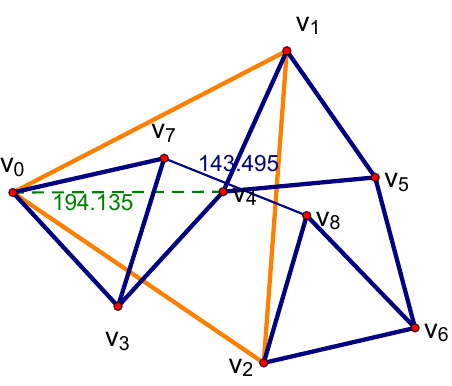} \\
        \includegraphics[width=0.20\linewidth]{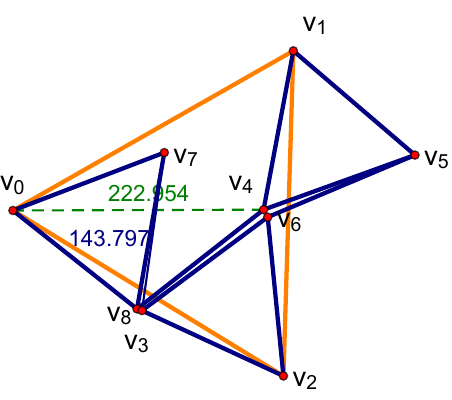} &
        \includegraphics[width=0.20\linewidth]{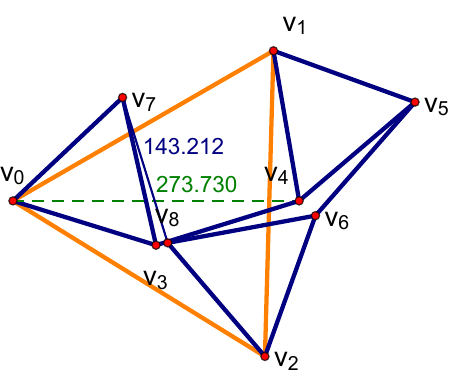} &
        \includegraphics[width=0.20\linewidth]{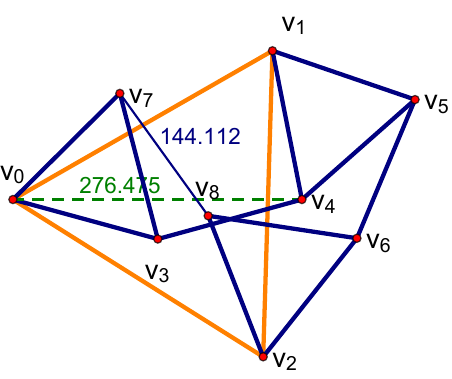}
    \end{tabular} \\ \hline
    \( 3,\,4,\,8 \) & \( \left( v_0,v_4 \right) \) &
    \begin{tabular}{cc}
        \includegraphics[width=0.25\linewidth]{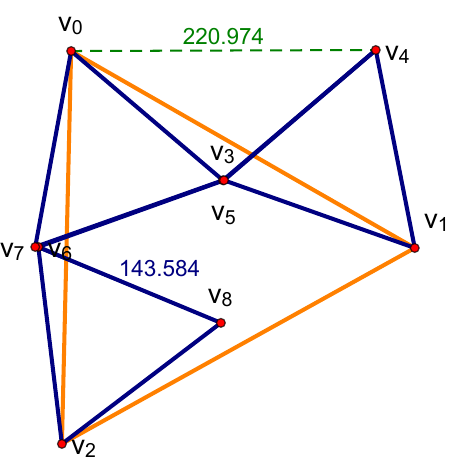} &
        \includegraphics[width=0.22\linewidth]{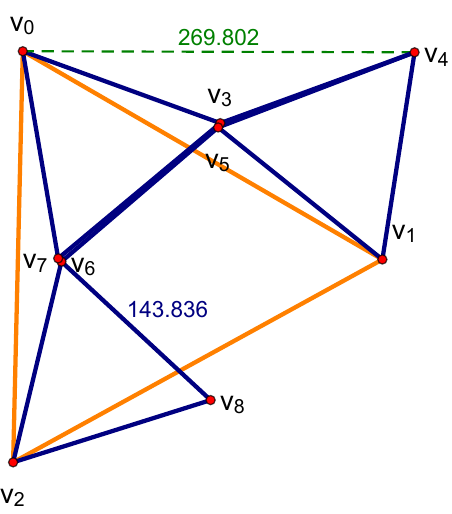}
    \end{tabular} \\ \hline
\end{tabular}
\end{table}

\begin{table}
\label{tab:trihex2_sol}
\caption{\footnotesize 4 solutions found for trihex \ref{fig:trihex2}}
\begin{tabular}{c|c|c}
    \hline
    Flip & Base Edge & Solutions \\ \hline
    \(-\)& \( \left( v_0,v_4 \right) \) &
    \begin{tabular}{cc}
        \includegraphics[width=0.33\linewidth]{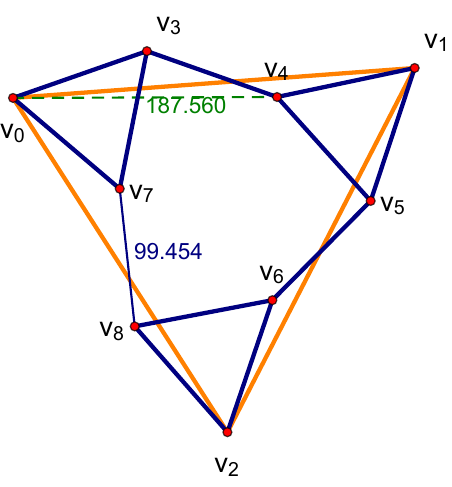} &
        \includegraphics[width=0.33\linewidth]{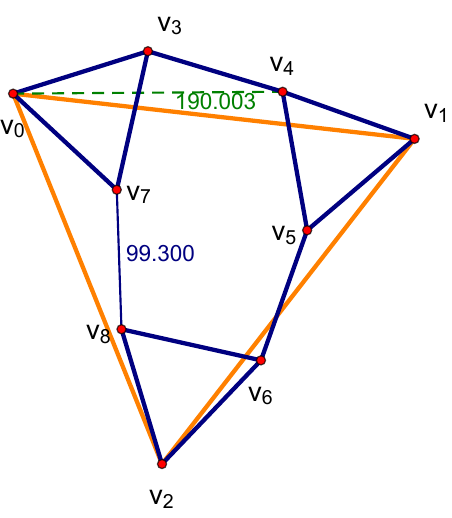}
    \end{tabular} \\ \hline
    \( 3 \)& \( \left( v_0,v_4 \right) \) &
    \begin{tabular}{cc}
        \includegraphics[width=0.33\linewidth]{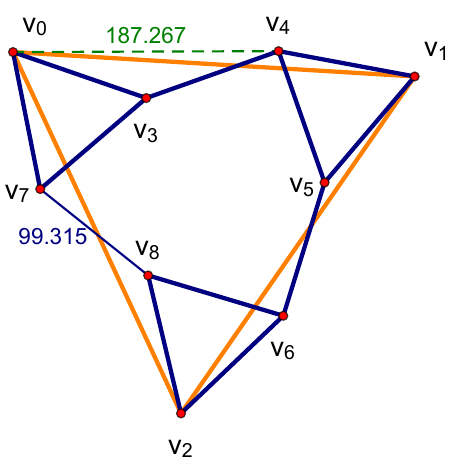} &
        \includegraphics[width=0.33\linewidth]{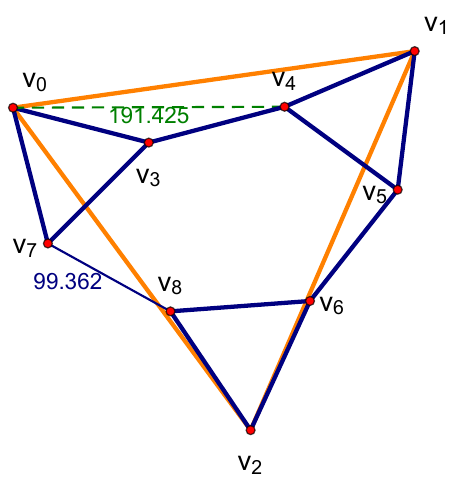}
    \end{tabular} \\ \hline
\end{tabular}
\end{table}

The next method extends and combines two well-known algorithms. The first is the so-called Decomposition-Recombination(DR)-planning algorithm \cite{Baker2015,survey,frontier,feature} that recursively decomposes minimally rigid, or isostatic graphs with edge-length constraints so that only small constraint systems need to be solved simultaneously, and
recombines their solutions \cite{sitharam2006wellformed,sitharam2010optimized,sitharam2010reconciling}  to get all equivalent frameworks. The method of recombination provides an indexing of frameworks around critical configurations \cite{acmtog}.
The second is the use of convex Cayley parameterizations \cite{SiGa:2010}.
The existence of a convex Cayley configuration space is a robust property of graphs underlying frameworks. For frameworks in 2 dimensions, the graphs are so-called partial 2-trees. Informally, complete 2-trees are constructed by pasting triangles together on edges, and partial 2-trees are  subgraphs of complete 2-trees. Such characterizations exist even for frameworks whose bar lengths are in non-Euclidean, polyhedral norms, and is strongly linked to the concept of flattenability \cite{BeCo:2008,SitharamWilloughby2015} of graphs, characterized by forbidden minor subgraphs, for example, partial 2-trees are exactly those graphs that forbid the complete subgraphs on 4 vertices, or $K_4$.
Convex cayley parameterization has been used for analyzing sphere-based assembly configuration spaces in \cite{PrabhuEtAl2020, OzkanEtAl2018, OzkanEtAl2020}, further applied to predicting crucial interactions in virus assembly in \cite{WuEtAl2012,  WuEtAl2020}.

The idea is to drop sufficiently many edges from a glassy framework (\emph{arbitrarily large versions of the Trihex}), shown in red in Figure \ref{fig:drplan-construction}, so that the remaining graph has a convex Cayley configuration space (is a partial 2-tree) parameterized by non-edge lengths shown in green. In this case, the boundary vertices are not pinned, instead boundary edges are judiciously added to make the graph minimally rigid or isostatic, while ensuring the convex Cayley property. This can be achieved for a class of planar graphs encompassing corner-sharing triangular or glassy graphs.
Figures \ref{fig:drplan-construction} and \ref{fig:my_label} have light colored lines  showing the chosen boundary edges.  Importantly, the resulting 2-tree has a simple DR-plan, called a \emph{flex DR-plan} since it requires dropping and adding edges.  Furthermore, the flex DR-plan   permits sequential solving of univariate quadratic equations corresponding to the dropped edges one by one, i.e. \emph{flex 1}, except for a single, final high degree univariate polynomial. This further permits  the indexing of equivalent realizations around critical configurations.
 These key ideas are based on graph theory (to find the flex 1 DR-plan) and numerical solution of a system of quadratic equations by solving a sequence of univariate quadratic equations (with a Cayley parameter as variable) followed by a single univariate equation of large degree (in a final Cayley parameter).   The resulting algorithm to find the solution corresponding to a given index runs in polynomial time in the size of the glassy system and the required accuracy and is fleshed out in upcoming manuscripts \cite{kang1,kang2}.

\begin{figure}
    \begin{subfigure}[t]{\linewidth}
    \centering
    \includegraphics[width=\linewidth]{./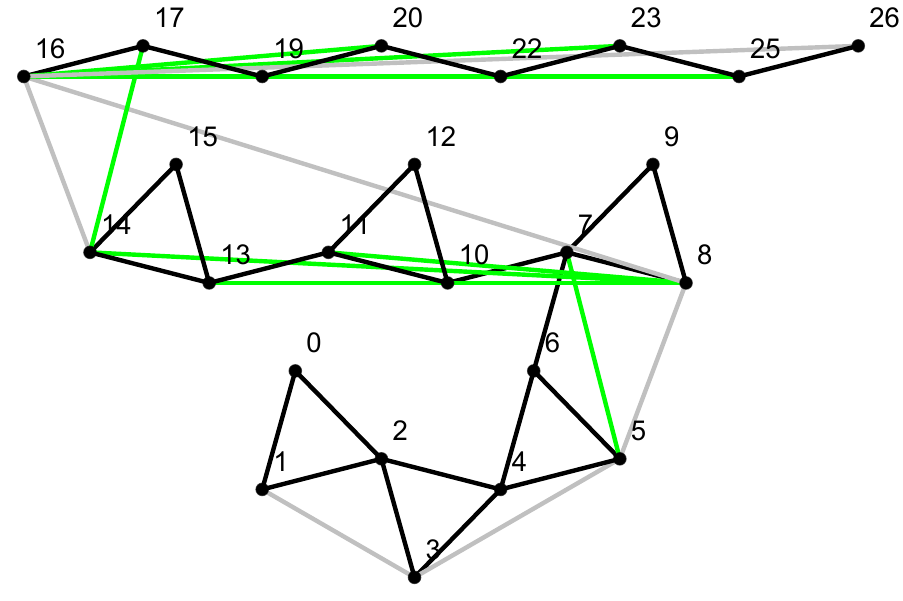}
    \caption{\footnotesize Two-tree with Cayley parameters.}
    \end{subfigure}
    \begin{subfigure}[t]{\linewidth}
    \centering
    \includegraphics[width=\linewidth]{./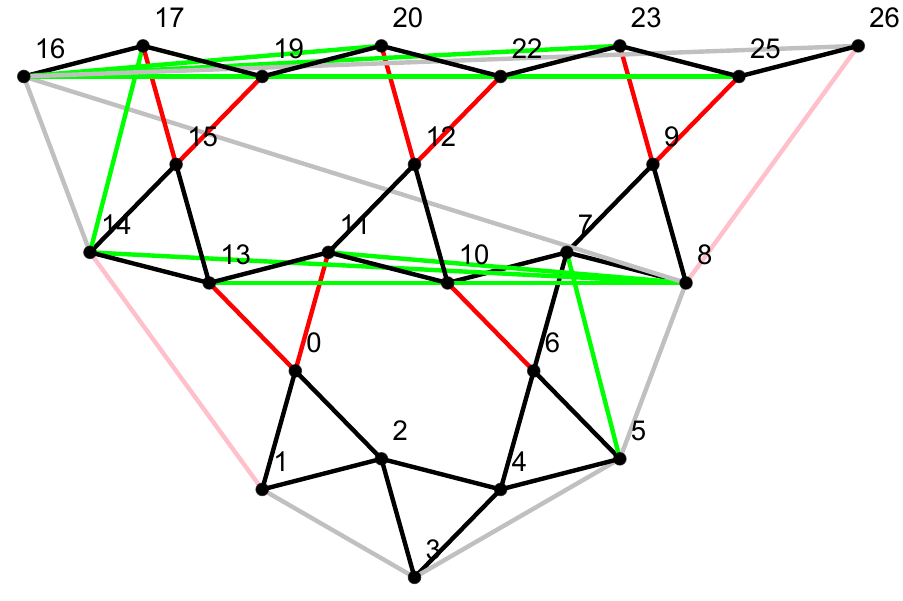}
    \caption{\footnotesize Flex DR-Plan with dropped edges.}
    \end{subfigure} \\
    \caption{\footnotesize We construct the flex 1 DR-plan of a hexagonal lattice manually by applying two actions to the graph: (1) add boundary edges (gray) so that the graph is infinitesimally rigid; (2) drop some of the edges and add some Cayley parameters (green) so that the new graph becomes a two-tree. During recombination, we  add back those dropped edges (red, or pink if boundaries), i.e., we solve for the green Cayley parameter lengths that achieve the given red (and black) edge lengths. An important property of a DR-plan of flex 1 is that we are able to solve for one red edge at a time. Empirically, we usually apply three actions by starting from a small subgraph and expand it by adding one Cayley parameter and drop one original edge. When reach the boundary we add a boundary edge if we cannot construct a two-tree by adding only one Cayley parameter.}
    \label{fig:drplan-construction}
\end{figure}

\begin{figure*}
    \centering
    \includegraphics[width=\linewidth]{./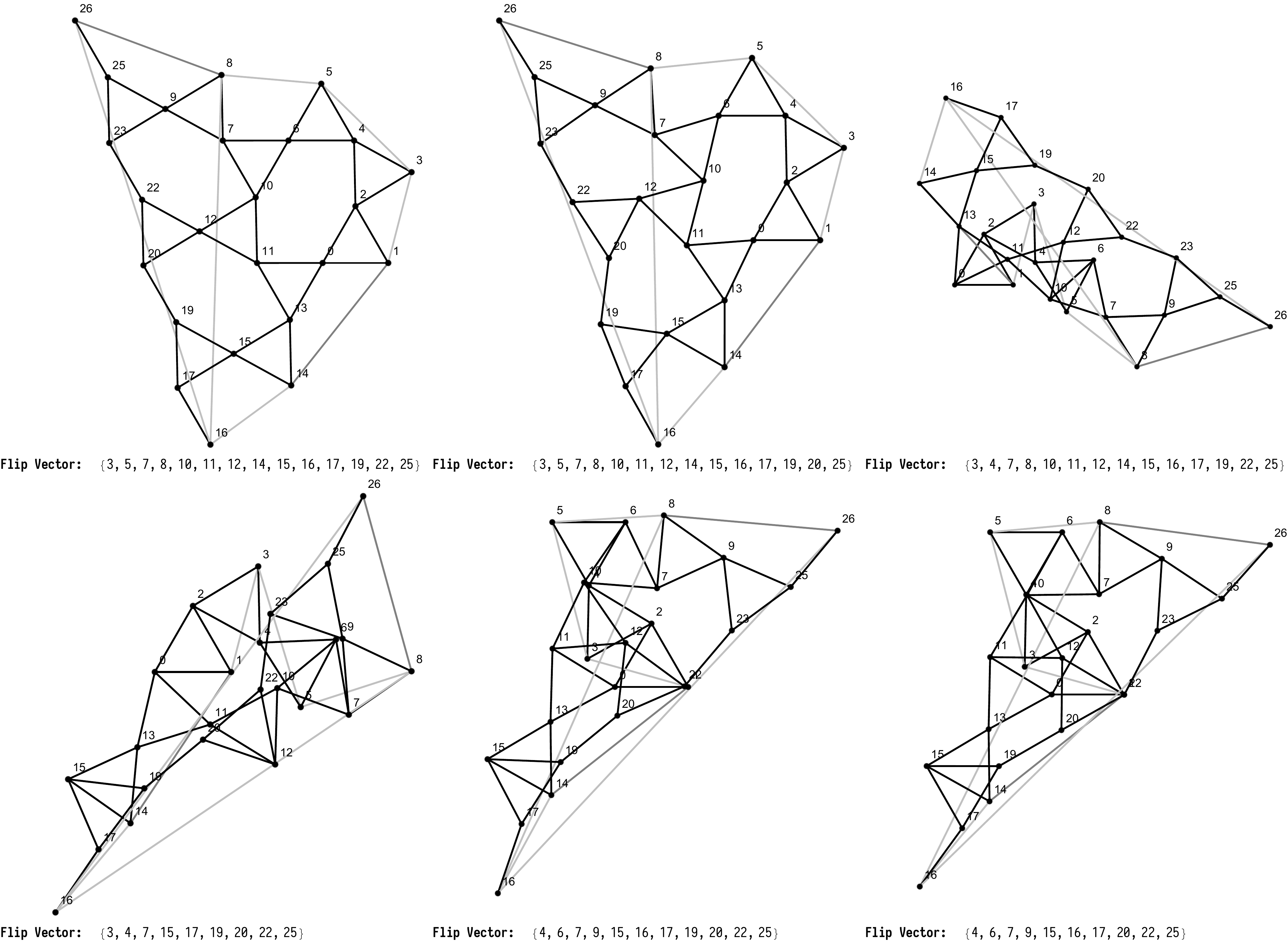}
    \caption{\footnotesize The \(6\) realizations with different flip vectors we found by the DR-Plan solver given Fig. \ref{fig:drplan-construction} as the input. The flip vectors are shown below each realization. The overall error of the dropped edges is $ 0.07 \% \pm 0.21 \% $}
    \label{fig:my_label}
\end{figure*}

\section{Changing ratio of edge-lengths to distance between pinned boundary vertices}
It is trivial that if edge lengths are chosen so short, the triangles
cannot span the distance between the pinned vertices. Therefore at that limit,
set of Eqs.~\ref{eq:edgelens} have no solution.
If we take the vertices with coordinates shown in Fig.~\ref{fig:hexagon}
as an initial structure (where the distance between the pinned vertices is $l_0 \approx 1.0$; recall that the pins are not located on an equilateral triangle and therefore $l_0$ is in fact the average distance between the pins), we are interested in counting the total number of realizations
for a given edge length. Since the Trihex is a small enough example, the complexity advantage of the methods of the previous sections, e.g CayMos are not as crucial. So the realizations are found by directly solving Eqs.~\ref{eq:edgelens} for
uniform edge lengths ranging from $0.34$ to $2$. The equations in \ref{eq:edgelens} are solved in Mathematica using ``NSolve'' function. The pinned boundary
conditions is convenient since no trivial translation or rotation is present~\cite{theran2015anchored}.
Figure~\ref{fig:nrealize} shows the number of \textit{real} solutions for a given $s$.  The red line shows the
total number of distinct \textit{complex} solutions for the Trihex which is fixed and equal to $112$ computed using Magma~\cite{MR1484478}. This number is independent of the chosen edge length and is the upper bound on the number of realizations. By changing the edge length, some but not all of complex solutions become real.

The first real solutions appear to be a single point at
edge length $\approx 0.346$ (see Fig.~\ref{fig:nrealize}). This is an interesting point as it seems there exists only one solution and the
theorem is violated but in fact there are two solutions at this limit although infinitesimally close. This is the signature of a fully stretched network which has the maximum possible volume or the lowest density. From this point of view, the problem is also related to the flexibility window in glasses where naturally-occurring glasses are found near their low-density limit~\cite{sartbaeva2006flexibility,kapko2010flexibility}.
As we increase the edge length $s$, the two infinitesimally close solutions diverge and quickly two new solutions join the previous solutions. The number of
realizations in Figure \ref{fig:nrealize} generally increases up to a maximum number. In fact, two sharp increases happen at $\approx 0.5$ and $\approx 1.0$ due to the fact the pins are roughly $l_0 \approx 1.0$ unit
apart. Therefore when $l_0$ is roughly an integer multiple of
the edge length of the triangles, the triangles can tightly fit together and new solutions appear. Figure \ref{fig:sols} shows some realizations (out of 76 possible realizations) for the edge length $1.0$.

After reaching a maximum of $104$ realizations, the number of solutions rapidly drops. Our numerical experiments show that a subset of solutions survive even at very large edge lengths (high density) and the number of realizations reaches a plateau of $44$ solutions. Fig.~\ref{fig:limit} lists these $44$ states when the edge length is set to $100$ but Fig.~\ref{fig:nrealize} shows the number of realizations up to $s=2$. Many solutions in this regime are related by
an approximate mirror symmetry, as we expect the three blue pins to be co-incident
and the equilateral triangles are roughly arranged around a central point.

\begin{figure}[t]
\centering
\includegraphics[scale=0.4]{./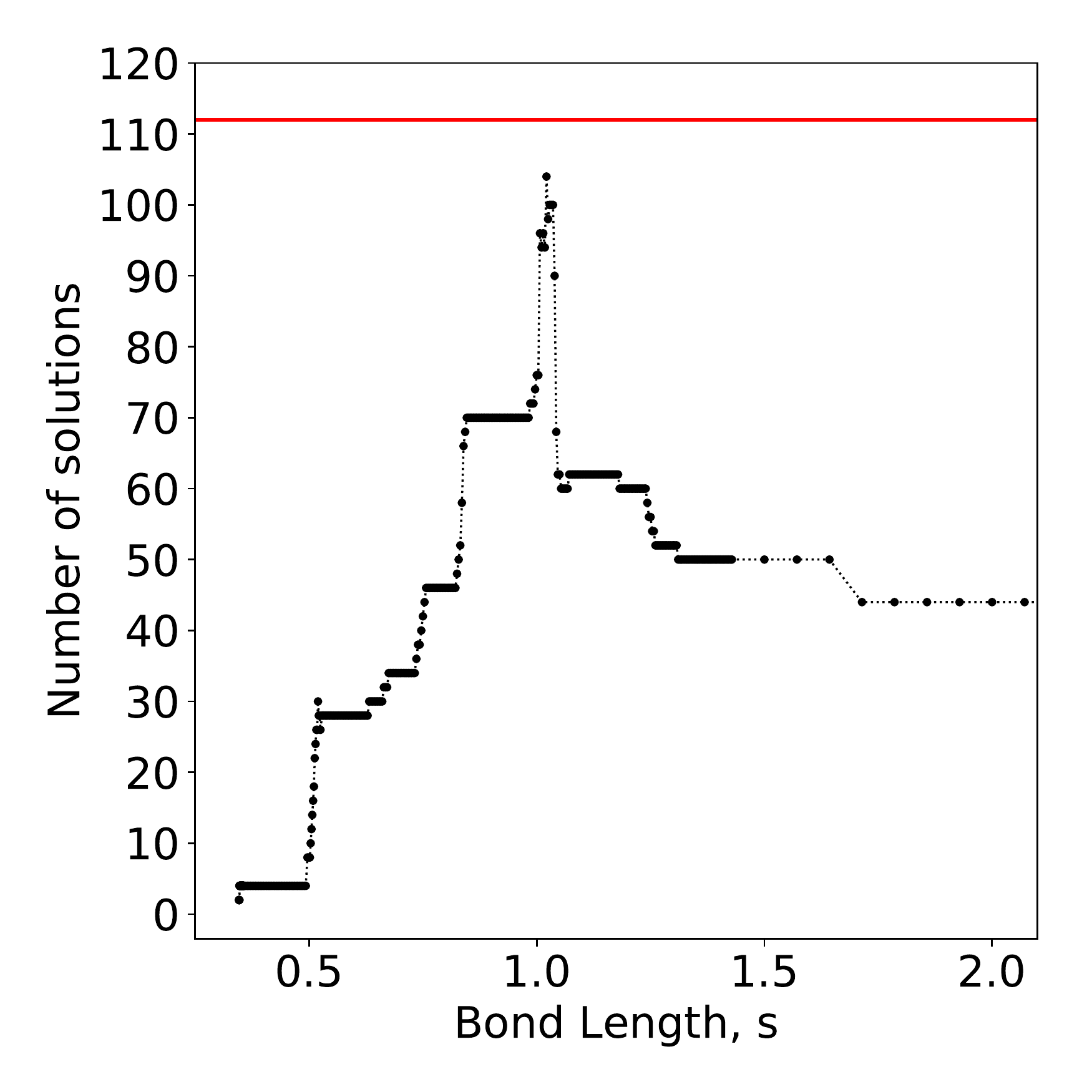}
\caption{\footnotesize Number of realizations of Trihex shown in Fig. \ref{fig:hexagon} by varying edge length in the units that the distance between pins $l_0=1$. The horizontal red line shows the total number of {\it{complex}} solutions which is equal to $112$. The number of solutions increases rapidly when the edge lengths are half and equal to the pins distance.}
\label{fig:nrealize}
\end{figure}

\begin{figure}[t]
\centering
\includegraphics[scale=0.3]{./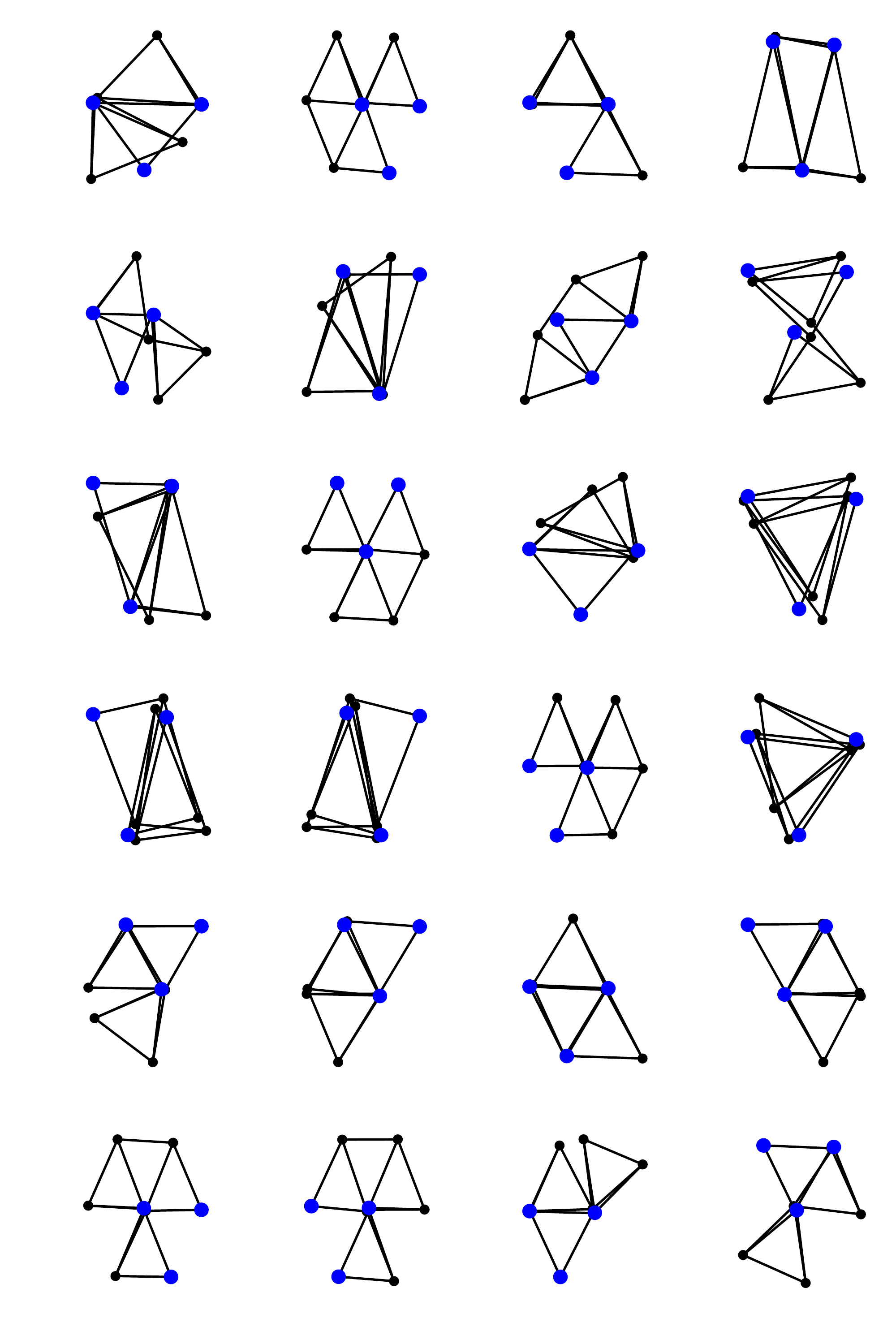}
\caption{\footnotesize Some of the realizations with the edge length equal to $s=1.0$.}
\label{fig:sols}
\end{figure}

\begin{figure}[t]
\centering
\includegraphics[width=7.5cm]{./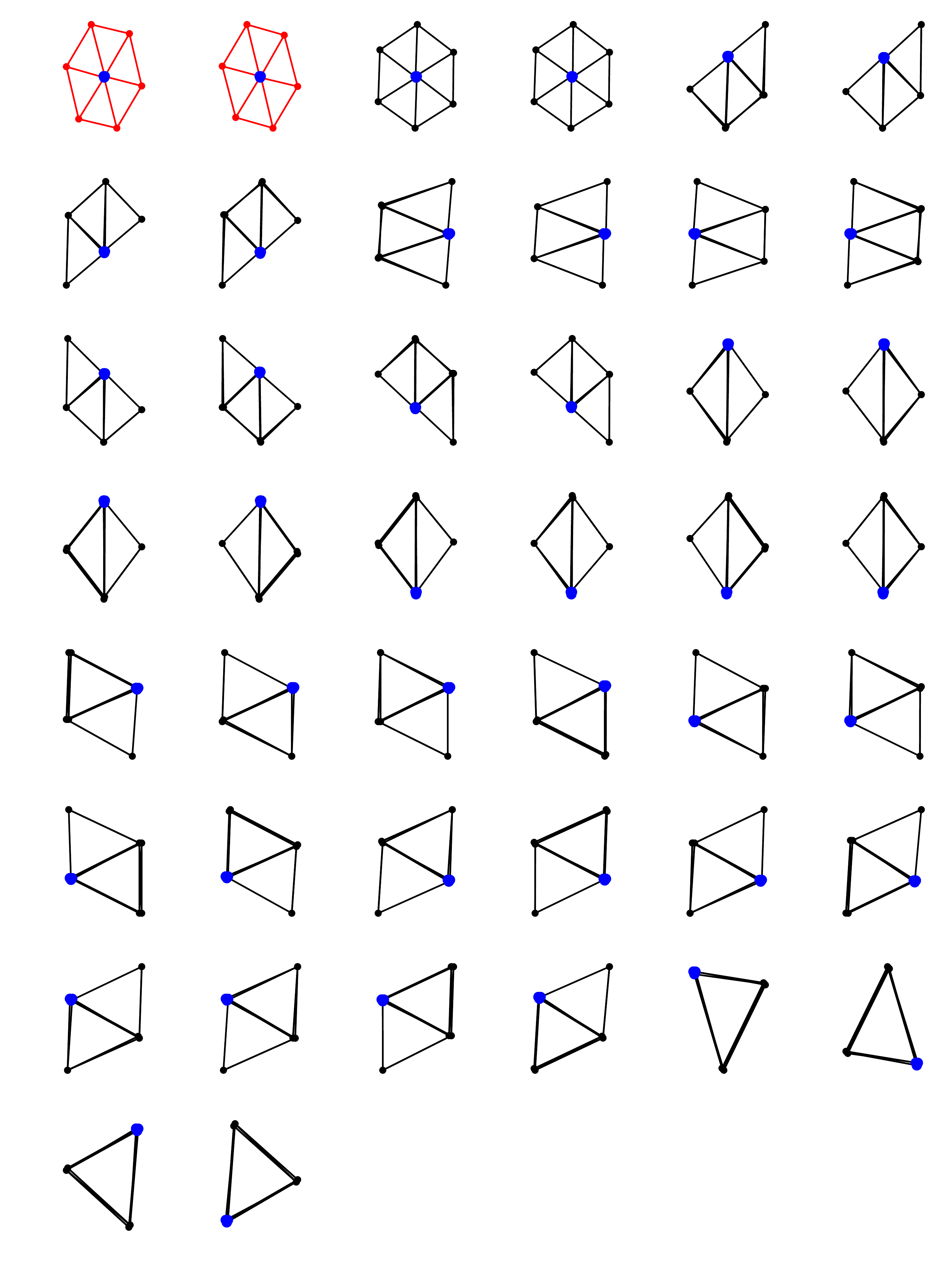}
\caption{\footnotesize The 44 solutions in the large edge length limit where the edge length $s$ is set to 100. Similar solutions are related by mirror symmetry but not rotation. The first two realizations are colored to emphasize that although the graphs look alike, different vertices have occupied the same coordinates and therefore they count as two distinct realizations.}
\label{fig:limit}
\end{figure}

Although it is pedagogic to inspect the individual solutions visually, we need to
distinguish various realizations with the same edge lengths. This also helps to track how
solutions at a given edge length from the previous solutions with smaller edge lengths. We choose
the mean distance of vertices from the centroid of the pinned vertices as the metric.

If we plot this metric versus the edge length ratio for each realization, the result is the \emph{trajectories} in Figure \ref{fig:lbar}, showing how some solutions persist for a long range but
others disappear. These trajectories represent solutions to a different set of equations than the $2N -1$ equations of the form Equation \ref{eq:edgelens}   whose solutions form the closed curves of the 1-dof  mechanisms given by the single-cut or Caymos algorithms.
This system has $2N$ equations of the form Equation \ref{eq:edgelens} for the bonds,  but has an extra variable $s$ - representing the bond length  (boundary-edge length is fixed due to pinned vertices).
Red and green lines respectively show the linear and quadratic fit to the
persistent paths which shows an intermediate growth rate. Previously, we discussed
that there is a sharp increase in the number of realizations at $s$ around $0.5$ and $1.0$.
Fig.~\ref{fig:lbar} shows that along those values, there is a tremendous amount of
activity and a large set of solutions are only present in a small region of
edge lengths.


\begin{figure}[!ht]
\centering
\includegraphics[scale=0.4]{./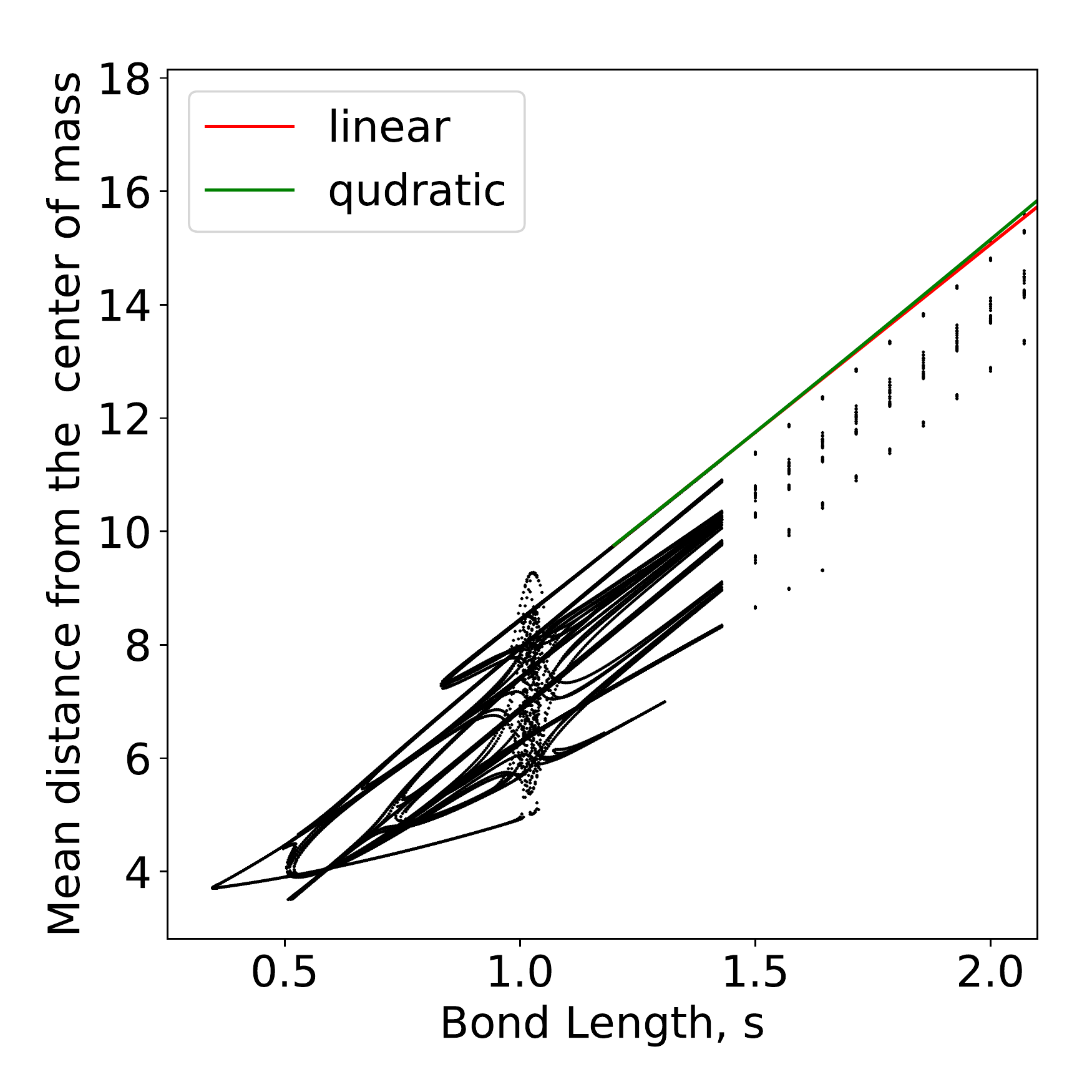}
\caption{\footnotesize The mean distance of vertices from the centroid of the pinned
vertices vs. the edge lengths. The mean distance scales quadratically (green) not linearly (red). The two curves are fitted to the
topmost points with edge lengths $s$ between 1.1 and 1.4 but are extrapolated to the
outside of this window.}
\label{fig:lbar}
\end{figure}

\begin{figure}[!ht]
\centering
\includegraphics[scale=0.3]{./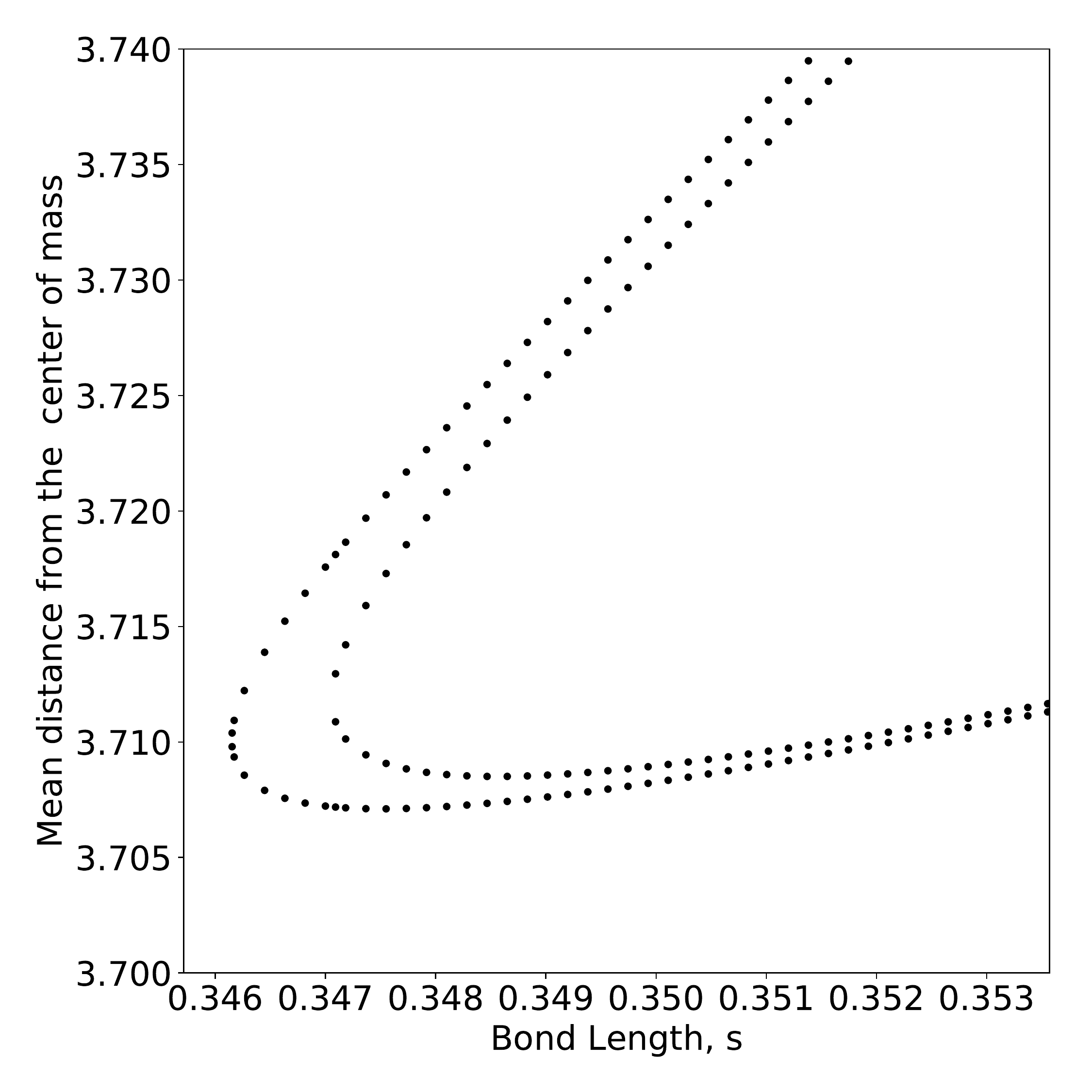}
\includegraphics[scale=0.3]{./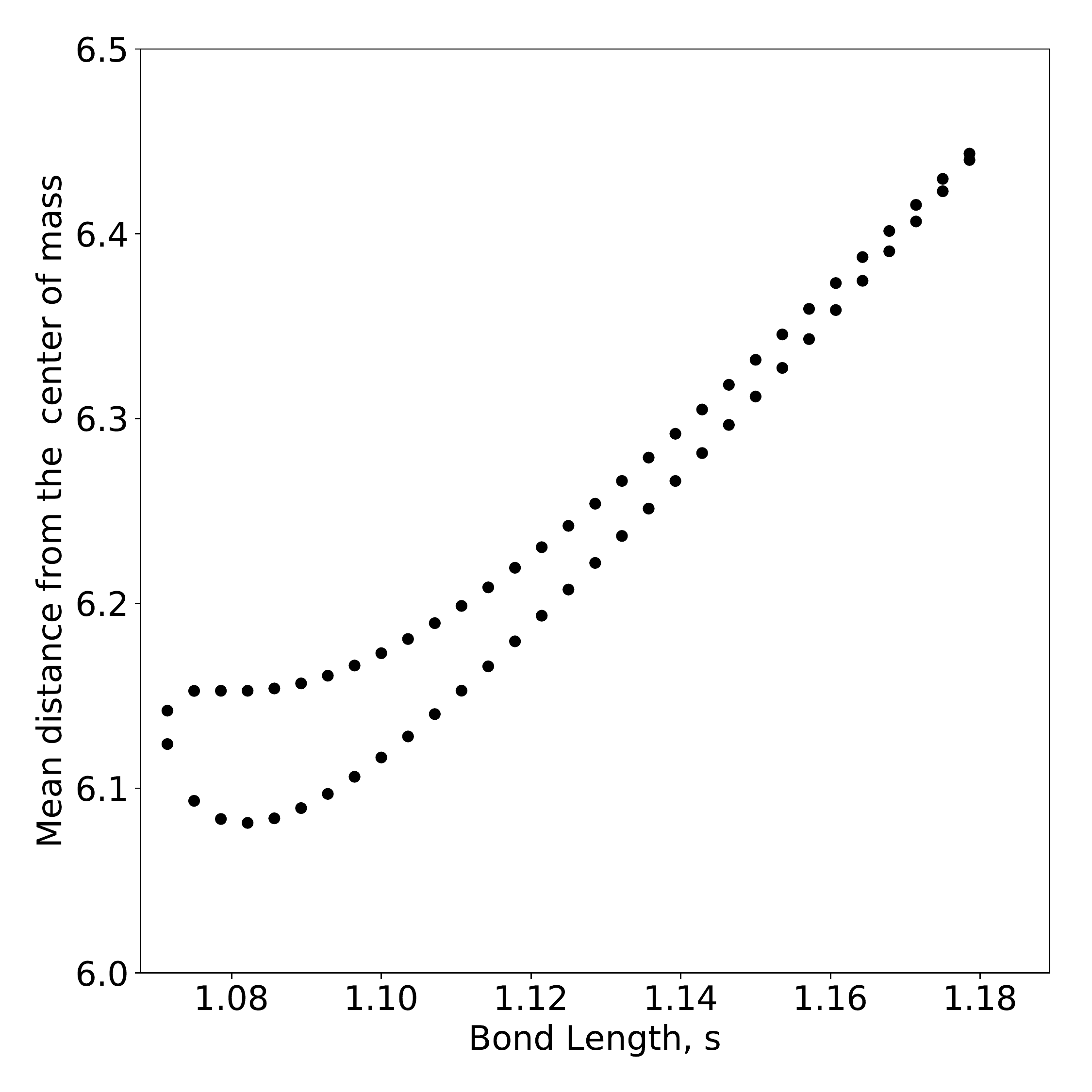}
\caption{\footnotesize
Open  trajectories (top) and simple closed trajectories (bottom) are two ways in which realizations appear and disappear. The upper panel depicts the initial solutions at the low-density limit. Note that first two solutions emerge and then they diverge while at a secondary point, a new trajectory of solutions appears. In the lower, a pair of solution gradually converge and finally form a close loop.}
\label{fig:initial}
\end{figure}

\begin{figure}[ht]
\centering
\includegraphics[width=8cm]{./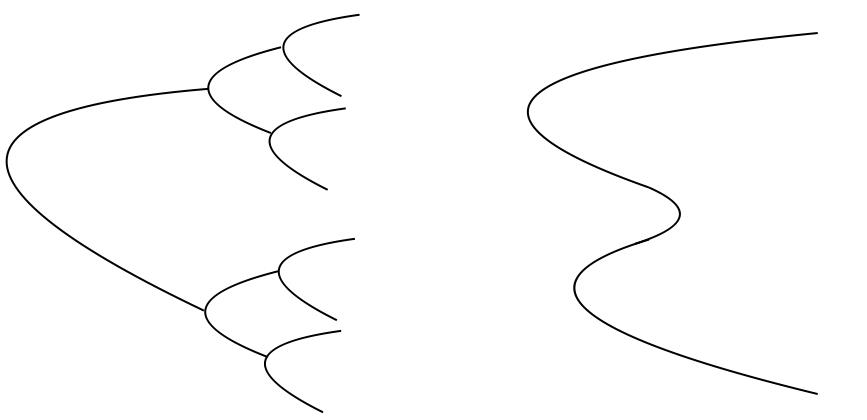}
\caption{\footnotesize The left panel is a bifurcation which we have never observed. The right panel is a ``retrograde'' which is an alternative way of losing solutions, which we do observe for the trihex.}
\label{fig:retrograde}
\end{figure}

The complexity of paths in Fig.~\ref{fig:lbar} makes it certainly constructive to
look at specific regions of edge length in more details.
Our observations show that new solutions always come in as a pair.
Based on the results, we have observed three mechanisms for
appearance/disappearance of solutions: \textit{simple closed trajectories},
\textit{open trajectories}, and \textit{retrogrades}. Example of
simple closed and open trajectories are given in Fig.~\ref{fig:initial}. Open trajectories are the persistent  trajectories that continue to exist even at very large edge lengths. A retrograde is
a  trajectory that bends backward which is a disappearance mechanism; an example is given
in the right panel of Fig.~\ref{fig:retrograde}. This leads to more complex circuits replacing the simple loop in upper part of Fig.~\ref{fig:circuits}.  For the retrograde there can be 4 intercepts and it is clear that any closed loop will have an even number of intercepts, consistent with Theorem 1.
However, we have found no evidence of
\textit{bifurcations}~\cite{arnol2003catastrophe,feigenbaum1978quantitative}(Fig.~\ref{fig:retrograde}, left panel).

\begin{figure}[t]
\centering
\includegraphics[scale=0.3]{./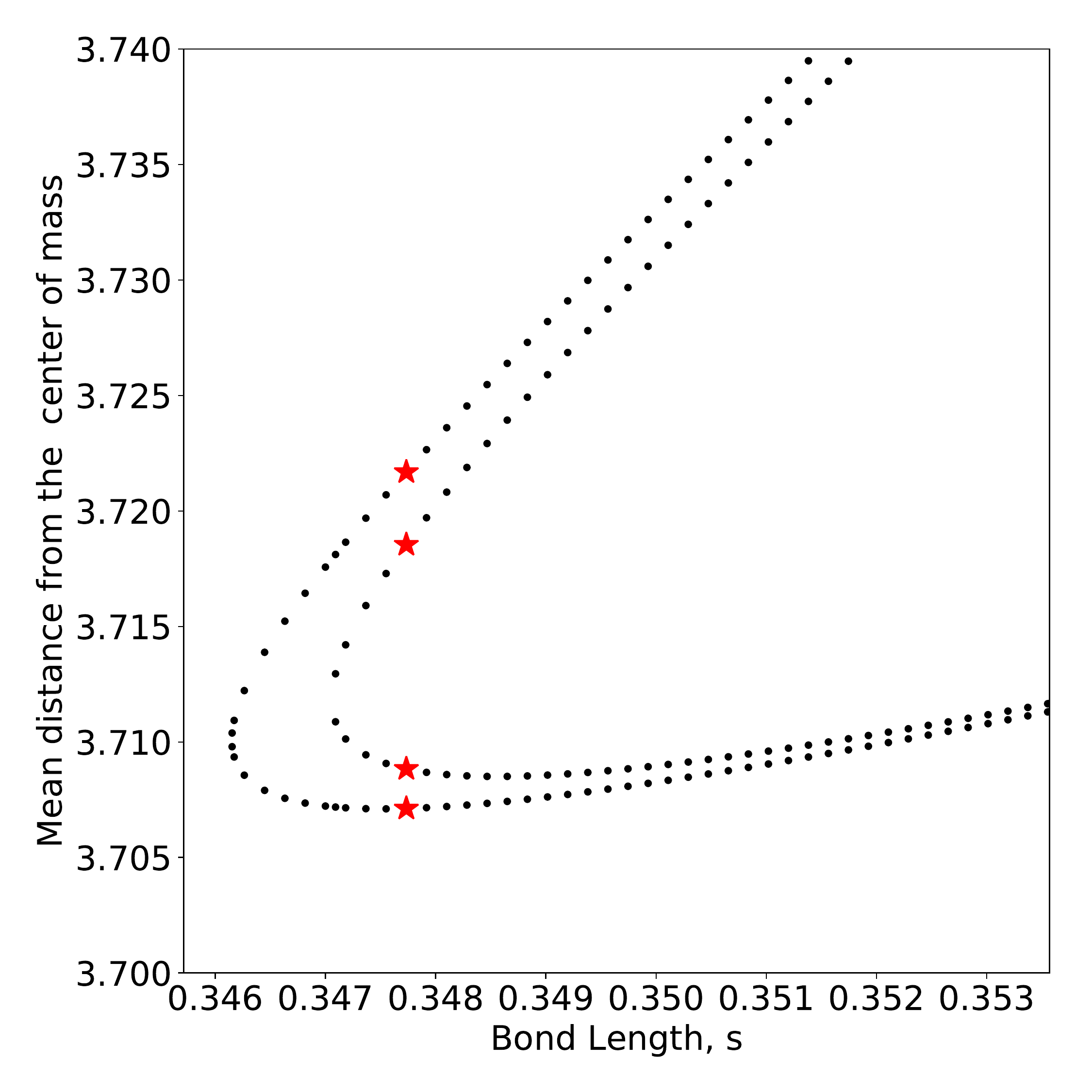}
\includegraphics[scale=0.25]{./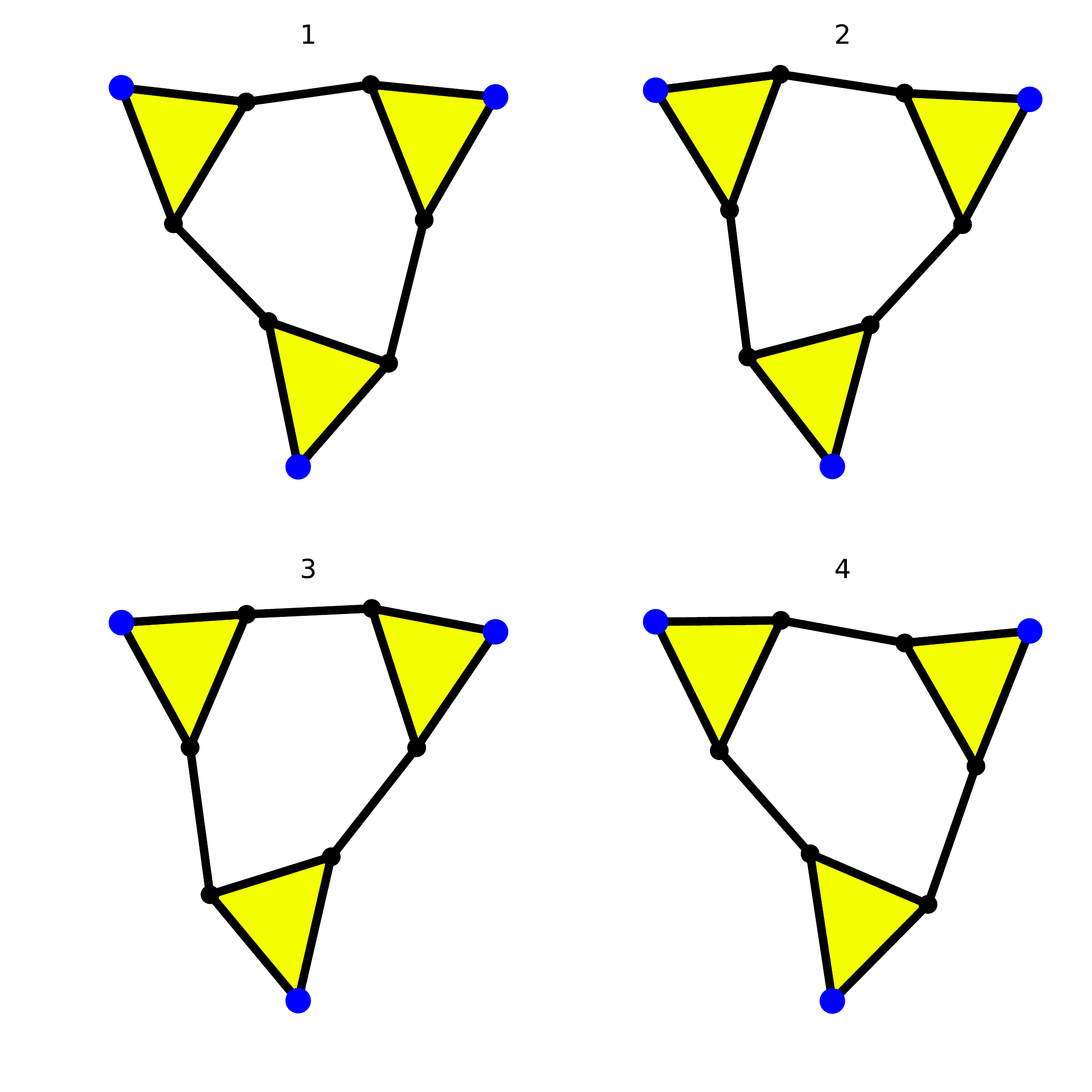}
\caption{\footnotesize The four solutions with the edge length equal to $\approx0.3477$ marked with
red asterisks. The solutions are
numbered from the smallest mean distance from the centroid ($y-$axis) to the largest.
The solutions on the same trajectory show a small displacement but a more significant
motion is involved among the solutions from the different trajectories.}
\label{fig:4sols}
\end{figure}

\section{Energy Barriers}
The discussion in the previous section showed that different realizations of an
isostatic network can be found by solving edge length equations.
Realizations of a framework are thought to be related to tunneling states in glasses. Fig.~\ref{fig:4sols}
depicts four realizations of Trihex at $s \approx 0.348$. Their corresponding
points are marked by red stars in the left plot.
The amount of energy in transition from one state to another is a way  of classifying sets of states.
Note that the landscape has no minimum except the listed four
solutions.
We perform a linear interpolation between two given states. Let $\mathbf{s}_1$
and $\mathbf{s}_2$ be two solutions in the configuration space. We write:
\begin{equation}
  \label{eq:linearinter}
 \mathbf{s}(\lambda) = \mathbf{s}_1 + (\lambda+\frac{1}{2}). (\mathbf{s}_2-\mathbf{s}_1)
\end{equation}
Assuming bonds are harmonic springs with spring constant equal to unity, the energy can be
found as a function of $\lambda$, where $\lambda = -\frac{1}{2}, \frac{1}{2}$
correspond to the two states.
We can think of two paths on this plot
as two   trajectories of frameworks. The equivalent frameworks $1$ and $4$ belong to the  first
 trajectories while $2$ and $3$ lie on the second trajectory which exists only  when $s \ge 0.347$.
The pair of  frameworks that belong to the same trajectory 
indeed have very similar
configurations. In fact, the largest difference between the pairs is the reflection of
the top connecting edge along horizontal axis. For the pairs that do not belong to
the same  trajectory,
the motion involves the significant
rotation of the bottom triangle.
If we would assume that the edges are \textit{harmonic springs} and not fixed-lengths bars,
the energy path connecting the pairs of realizations on different branches has a much higher
energy barrier compared to that of the pairs on the same branch. Note that the whole energy landscape of Trihex at this edge length has only $4$ minima. The
nice feature of the landscape of Trihex is that the global minimum energy is
exactly zero.
\begin{figure}[ht]
\centering
\includegraphics[width=8cm]{./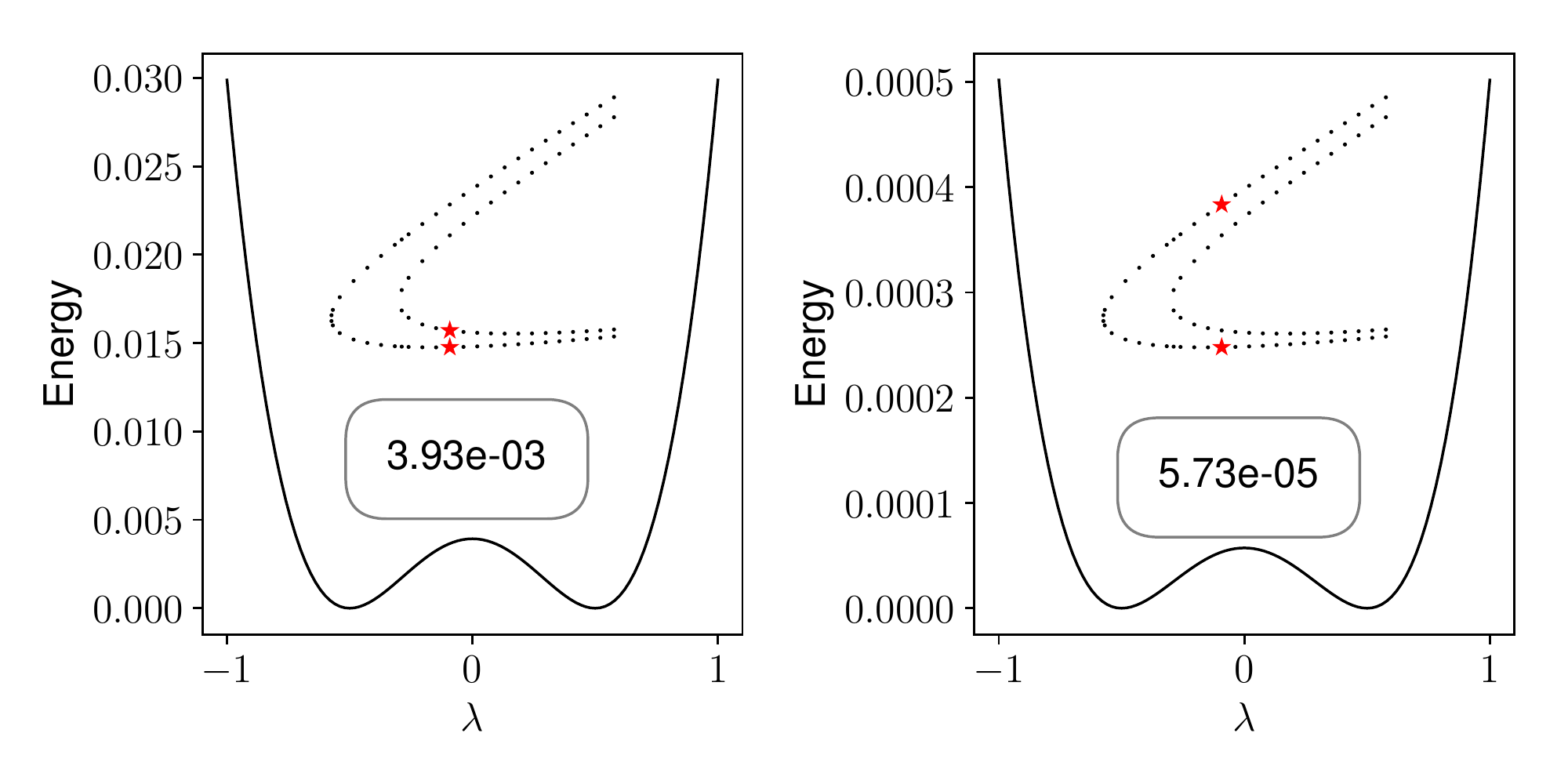}
\caption{\footnotesize The energy landscape of transition states between various realizations
of Trihex, found by linear interpolation (Eq.~\ref{eq:linearinter})
between the realizations shown in Figure~\ref{fig:4sols}.
The inset box shows the height of the energy barrier in the units that spring
constant is set to unity. The inset figure shows the two solutions indicated by
red asterisks. The energy barrier between the realizations on the same branch
is significantly smaller than that of the realizations on different branches.}
\label{fig:barrier}
\end{figure}

This energy perspective makes an important bridge between Trihex examples and
glasses. If this picture from studying Trihex remains intact in glasses, we
expect to see that solutions play different roles depending on which branch
they belong. Fig.~\ref{fig:4sols} is particularly important since
the experimental density of glasses is close to the low density edge. So we
expect the discussion in this section would somewhat generalize to the 2D
glasses.
But as discussed, solving edge length equations is computationally expensive for a
large system. On the other hand, two-level systems in glasses are rare. To have the
slightest hope of finding a tunneling state, we need to study systems that are
considerably larger than Trihex.
This makes it inevitable to design an alternative approach to find realizations of
a framework starting from already available information.

\section{2D Glasses and Jammed Networks}
Trihex serves as an illuminating toy model to present the main ideas about the existence of multiple realizations for an isostatic network and techniques to find such realizations. However, these methods are categorically applicable to the networks inspired by or modeled directly from materials such as network glasses or jammed granular packing. Such materials are either at or very close to the isostatic states and their behavior is significantly influenced by their rigidity~\cite{sartbaeva2006flexibility,kapko2010flexibility,ellenbroek2015rigidity}. But our main concern here is to establish a link between the multiplicity of material realizations and their physical properties, specifically the existence of the tunneling modes in such materials.

In the case of glasses, we are focused on those glasses that can be modeled as a network of corner-sharing tetrahedra in 3D or triangles in 2D. An example of the former is SiO$_{2}$ or GeO$_{2}$ and of the latter is a silica bilayer which although is 3D dimensional but it can be seen as two mirroring layers of 1-atom thick of Oxygens connected through bridging atoms to complete the chemical bonds~\cite{sadjadi17refining}.

In 2D glasses, every atom/vertex is four-fold coordinated (four shared constraints) but since each vertex has two degrees of freedom (translational degrees of freedom), 2D glasses are \textit{locally} isostatic~\cite{theran2015anchored}. However, boundary conditions determine the \textit{global} rigidity of the framework. For networks extracted from the experimental data, atoms on the surface are not fully connected and anchored/pinned boundary conditions are necessary and sufficient to render the system isostatic~\cite{theran2015anchored}. For material networks made using computer simulations, the boundary conditions are periodic which means such 2D systems contain two redundant bonds which must be removed to render the network isostatic.

These computer-generated networks are prepared using Wooten-Winer-Weaire (WWW) algorithm
with the periodic boundary conditions while ensuring that the ring distribution
and the area of polygons are in agreement with the experimental data~\cite{wooten1985computer, kumar2014ring}. As a consequence, the edge lengths are no longer exactly equal and 2D glasses satisfy
a stronger definition of being \textit{generic}. Similar to Trihex, the structure is in
mechanical equilibrium, all edges are assumed to be harmonic springs initially at their rest lengths
and the dynamical matrix is positive semi-definite.

In the case of granular networks, grains are often modeled as an athermal packing of circles/spheres interacting via a Hookian or Hertzian potential. To model such packings we use the standard protocols that are common in jamming community. We start with a random distribution of bidisperse $(0.5:1,0.5:1.4)$ circles (to prevent crystallization) and we rescale all the radii uniformly to set any desired packing density $\phi$. The energy of the system, given by Eq. (\ref{eq:jammedenregy}), (where $\rho_{ij}$ is the distance between nodes $i$ and $j$ and $\sigma_{ij}$ is the sum of their radii) is then minimized by a standard FIRE algorithm~\cite{bitzek2006structural} using the pyCudaPacking package, developed by
Corwin \textit{et al.}~\cite{morse2014geometric, charbonneau2012universal} :

\begin{equation}
    \label{eq:jammedenregy}
    E = \sum_{ij} (1-\frac{\rho_{ij}}{\sigma_{ij}})^2 \ \Theta(1-\frac{\rho_{ij}}{\sigma_{ij}}),
\end{equation}
where $\Theta$ is the Heaviside step function to ensure that only overlapping circles are included in energy.

If the packing density is high enough ($\phi > 0.84$), the system will have several states of self-stress or redundant contacts. By decreasing the packing density quasi-statically, the system reaches a critically jammed state with zero pressure and one state of self-stress. This system can then be mapped to a network by replacing the center of mass of each circle with a node and replacing any none-zero overlap between neighboring particles with a bond between their corresponding nodes. Such a network has one bond in excess of isostaticity; By removing any one bond randomly one can make an isostatic network.

Once a system is at the isostatic point, in principle, the same techniques employed to find realizations of a Trihex are also applicable to material networks. However, such networks differ from Trihex in two significant ways. Firstly, material networks contain many possible atomic arrangements which leads to various couplings among atoms in the set of edge length equations (Eq.~\ref{eq:edgelens}). While Trihex is essentially a ring of triangles forming a hexagon, in experimental samples ring size varies from 4-8~\cite{kumar2014ring} which are distributed non-randomly on a plane~\cite{Sadjadi2016}.

Secondly, 2D glasses are generally much larger than Trihex and it might not be computationally feasible to apply the techniques directly. In fact, solving the set of edge length equations (Eq.~\ref{eq:edgelens}) exactly is practically impossible for systems as large as 2D glasses with $N\geq\mathcal{O}(10^2)$ which in turn means for larger systems the complete set of solutions and their evolving on branches are inaccessible.

The alternative method, the single-cut algorithm, is guaranteed to provide new realization(s) but is not an exhaustive method and some of the existing solutions will be unreachable. In this method, we need to compute the null space (the eigenvectors corresponding to zero eigenvalue) of a matrix of size $(dN)^2$ where $d$ is the spatial dimensions and $N$ is the number of vertices. Even if finding the null space can be done efficiently by avoiding diagonalizing the entire matrix, there is some error associated with moving $N$ atoms along non-trivial zero-mode. It is possible that the path would not be closed (the system would not return to its original conformation) due to accumulation of errors. Therefore, either the step size should be chosen sufficiently small to ensure the path is smooth or frequent energy-minimizations are necessary along the path; both of which are expensive.

\begin{figure}[t]
\centering
\includegraphics[scale=0.25]{./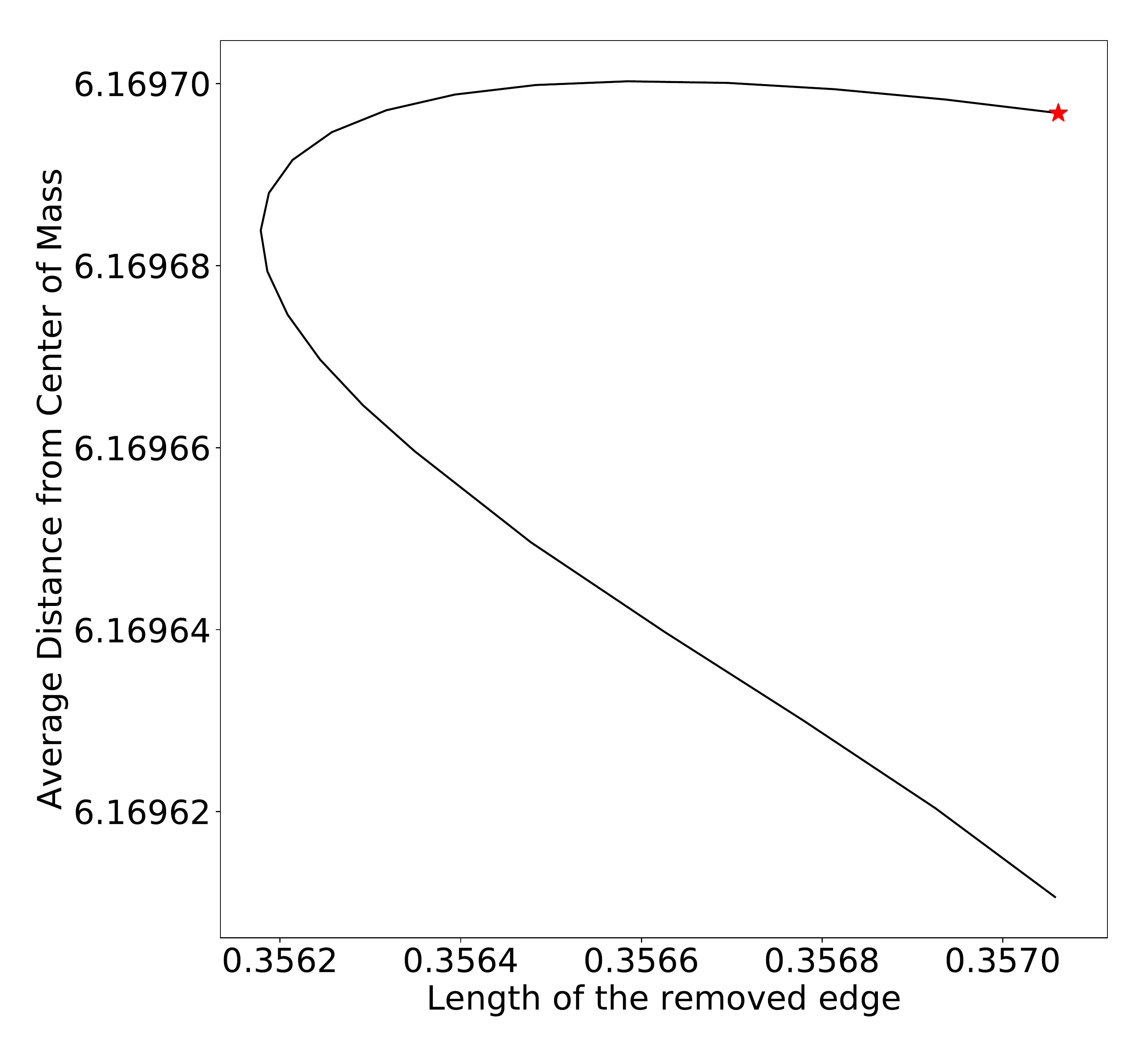}
\caption{\footnotesize The curve found by applying the modified single-cut algorithm to a 2D glass, in which the path traversal is stopped upon finding a solution. Fairly large steps are taken along the eigenvector with zero eigenvalue as is evident from the curve roughness.
The vertical axis represents the total distance of all vertices from the center of mass while the horizontal axis shows the distance between two ends of the removed edge. The red asterisk at the bottom denotes the original network and the top asterisk shows the alternative solution found by the path traversal.
The arrow is drawn to emphasize the fact that the real solution (indicated by the blue asterisk) with no error in
the edge lengths does not exactly lie on the drawn curve in configuration space and further energy minimization is required to find the correct coordinates.}
\label{fig:circuit300}
\end{figure}

Hence, we slightly modify the single-cut algorithm to find the alternative realizations of large isostatic networks. Since the existence
of the second realization is guaranteed, we take fairly large steps along the non-trivial zero-eigenvalue eigenvector alternative. In addition, it is not necessary to complete the curve in configuration space exactly to calculate the thermal properties. Therefore, once the curve intersects the vertical line denoting
the original length of the cut edge, we have found a second realization and the
path traversal can be stopped early. However, because of the larger steps along the curve, the position of vertices and subsequently the edge lengths have large errors which means
the conformation has not returned exactly to its original set of edge lengths. But since the conformation is close enough to the energy basin, we only refine the last conformation by an extensive energy minimization
to ensure that edge lengths are equal to their original values (Fig.~\ref{fig:circuit300}).

If the curve is complex enough, it contains more than two realizations.
Based on the modified scheme, if the curve traversal is stopped after
finding the first solution, we might miss a whole set of solutions. To address this concern, we note that glasses are found at the
extreme of density (edge of the flexibility window)~\cite{sartbaeva2006flexibility}. At this limit, only two solutions are expected, similar to Trihex example where at the maximal density point only two solutions existed. We tested the validity of this argument by applying the original single-cut algorithm on two 2D networks
($N=48 \textrm{ and } 300$) by removing all edges iteratively. It was observed that all closed curves
give two and only two distinct solutions independent of which bond is removed.

After applying the modified single-cut algorithm, two realizations of a 2D glass are available; they have the same exact topology and bond lengths, but the vertices
are displaced between the two states. The amount of this displacement determine whether the conformations are in fact the tunneling states. Anomalous specific heat is observed at temperatures about $1$ {K}, where the available energy is not sufficient for the displacement of a large group of atoms over a long distance. Therefore, it is expected that atomic displacements in a tunneling state are relatively localized. To characterize to what extent the displacements between two conformations are localized, we use the Participation Ratio (PR). If an atom $i$ is displaced by the vector $\mathbf{u}_i$ between two conformations, PR is defined as:
\begin{equation}
  \text{PR} = \frac{\left(\sum_{i=1}^{N}
  {|\mathbf{u}_i|}^2\right)^2}{N \sum_{i=1}^{N} {|\mathbf{u}_i|}^4}.
\end{equation}
For a perfectly delocalized mode ${|\mathbf{u}_i|} \sim 1/\sqrt{N}$ and $\text{PR} \sim 1$.
For a completely localized mode $|\mathbf{u}_i| \sim \delta_{ij}$, $\text{PR} \sim N^{-1}$.
Hence, a small value of PR is the signature of a localized mode. In the case of tunneling modes, it is expected that by increasing the system size $N$, the fraction of atoms participating in the mode decreases.

In addition to the locality of the displacement, it is essential to measure the significance of the atomic displacements in the limit of large systems. For a conformation to be considered a tunneling state, the atomic displacements should be significantly larger than zero-point motion. Assuming a harmonic oscillator, the zero-point amplitude $x_{0}$ is of order of:
\begin{equation}
  x_{0} \sim \sqrt{\frac{\hbar}{m \omega}} = \sqrt{\frac{10^{-34}}{10^{-26} \times 10^{14}}} = 10^{-11} \textrm{m} = 0.1 \,\textrm{\AA},
\end{equation}
for an oxygen atom. For an O$-$O bond length of $2.6${\AA}, $x_{0} \approx 10^{-2}$ in the unit of the bond length. If the typical
motion of the atoms measured by their mean displacement $N^{-1}\sum |u_i|$ is less than
$x_{0}$, such motions are not relevant to the tunneling states but nevertheless, they are mathematically correct and give rise to other realizations.

To quantify the atomic displacements and propitiation ratio in large systems, we prepare four networks of corner-sharing triangles under
periodic boundary conditions with varying size $N=48, 300, 1254, 5016$ and randomly remove two edges to satisfy the isostaticity condition. By applying the modified single-edge
cut algorithm, the corresponding second realizations are found. This allows us to study the behavior of the participation ratio and the mean displacement
of vertices as a function of the number of particles.

Table~\ref{tab:motionvals} summarizes the results for the mean displacement of atoms
and their participation ratio. The total displacement $\sum |u_i|$ increases slightly by system size but the average displacement of a typical particle $\sum |u_i|/N$ generally decreases. But regardless
of $N$, the mean displacement is smaller than of the zero-point motion $x_0$ and hence this motions cannot be representative of a tunneling state.
In addition, all networks exhibit modes that about $\sim 45$\% of all vertices are displaced in the system. Such an extended mode cannot be a tunneling state since in the limit of Avogadro number of atoms,
a massive number of atoms should be involved in such states which is not
energetically favorable. Unfortunately, it seems that the single-cut algorithm is
not able to find realizations that are sufficiently distant from the initial
realizations (evidenced by vanishingly small $|u_i|$ values) and sufficiently
localized (evidenced by the constant PR$/N$ value).

\begin{table}[tb]
\centering
\caption{The magnitude of displacements in the unit of the edge length found in simulations for different system sizes, $N$.}
\label{tab:motionvals}
\begin{center}
  \begin{tabular}{ c  c  c  c  c  c }
    \hline
    $N$ & $\sum |u_i|$ & $\sum |u_i|/N$ & $\sqrt{\sum |u_i|^2/N}$ & PR \\ \hline
    48 & $0.09$ & $1.89 \times 10^{-3}$ & $2.12 \times 10^{-3}$ & 0.44 \\
    300 & $1.47$ & $4.89 \times 10^{-3}$ & $5.80 \times 10^{-3}$ & 0.41 \\
    1254 & $2.03$ &  $1.62 \times 10^{-3}$ & $1.84 \times 10^{-3}$ & 0.47\\
    5016 & $2.94$ & $0.59 \times 10^{-3}$ & $0.67\times 10^{-3}$ & 0.46\\\hline
  \end{tabular}
\end{center}
\end{table}

Although from the localization and displacement considerations, it is evident that the found realizations cannot account for the tunneling states, nevertheless it would be insightful to study their thermal properties. For each system size, we can form a double-well potential where each realizaion sits at one of the energy minima. For every $N$, the energy pathway is found by the linear interpolation
(Eq.~\ref{eq:linearinter}) betweem the two realizations with zero energy. The height of the energy barrier $V_b$ can be estimated from the interpolated curve while the well separation $d$ is calculated
as the root-mean-square deviation (RMSD) of atomic positions.

Fig.~\ref{fig:dws} shows an example of a double-well potential
derived from the system with $N=300$ atoms. The black points are found using Eq.~\ref{eq:linearinter} while the red line is a $4$th-order polynomial fit to these points~\cite{sadjadidiss}.
The two blue lines are the harmonic approximations around two equilibrium realizations. The probability of the tunneling scales as $e^{-\lambda}$ where $\lambda$ is the tunneling parameter defined by the following equation
(derived from the ratio of kinetic and potential energies):
\begin{equation}
 \lambda = d \sqrt{\frac{2 m V_b}{\hbar^2}},
\end{equation}
where $m$ is the mass of an oxygen atom (See Appendix B in~\cite{sadjadidiss} for the derivation and a detailed discussion on the significance of the tunneling parameter).

Table~\ref{tab:dwpvals} summarizes the characteristics of the double-well
potential for four systems in SI units. The barrier height ($V_b$) of all
systems is a very small value which means the double-well is essentially flat in
the middle. The well
separation $d$ is also very small and at most about 5\% of O$-$O bond length and $\lambda$
shows a somewhat monotonic decrease with the system size (to find the exact dependence of the values on $N$ more samples should be used).

\begin{figure}[t]
  \centering
  {\includegraphics[scale=0.4]{./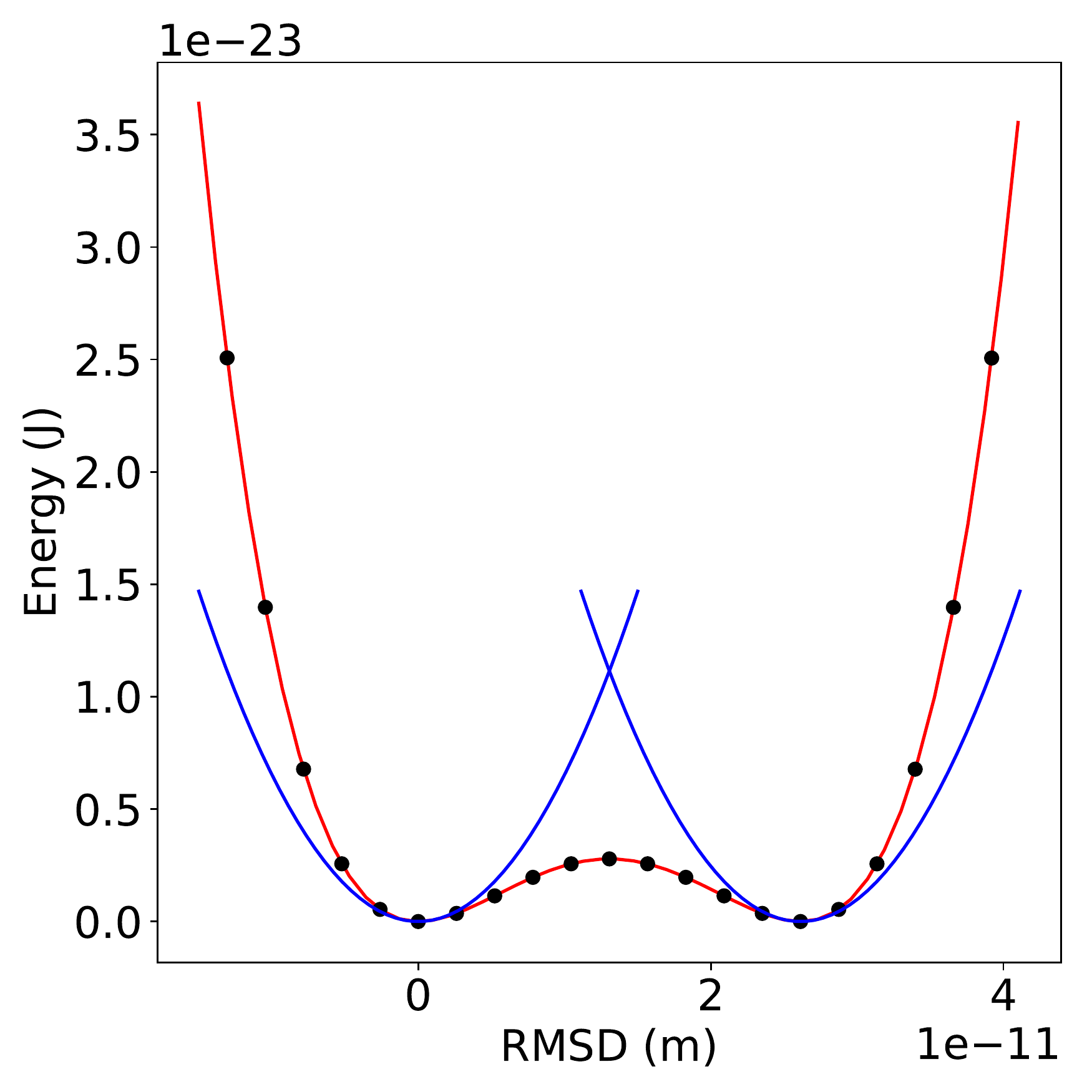}}
  \caption{ \footnotesize The double well potential found by linear interpolation
  between the two realizations for $N=300$. The black circles are found by linear interpolation, the red line is a fourth-order polynomial fit to the data. The blue curves show the harmonic approximations for two minima.}
   \label{fig:dws}
\end{figure}

$T_{\text{max}}$, in Table~\ref{tab:dwpvals},  denotes the temperature at which the specific heat of a two-level system is maximum. To find this temperature, we solved the Schrodinger's equation numerically using the embedding method (\cite{jelic2012double} and Appendix C in~\cite{sadjadidiss}) for the double-well potential in Fig.~\ref{fig:dws} and found its specific heat using energy levels. In general, $T_{\text{max}}$ happens to be at about $\sim10$ K which is much higher than
the range of temperatures at which the tunneling states are assumed to be active.

\begin{table}[t]
\centering
\caption{Characteristics of double-well potentials in SI units for four different system sizes, $N$.}
\label{tab:dwpvals}
\begin{center}
  \begin{tabular}{ c  c  c  c  c  c }
    \hline
    $N$ & $V_b$(J) & $T_b$(K) & $d$(\AA) & $\lambda$ & $T_{\text{max}} (K)$\\ \hline
    48 & $9.45 \times 10^{-27}$ & $6.84 \times 10^{-4}$ & $0.04$ & $0.0012$  & $16$ \\
    300 & $2.79\times 10^{-24}$ & $2 \times 10^{-1}$ & $0.30$ & $0.10$ & $12$ \\
    1254 & $1.44\times 10^{-25}$ & $1.04 \times 10^{-2}$ & $0.17$ & $0.02$ & $5$ \\
    5016 & $3.84\times 10^{-27}$ & $2.78\times 10^{-4}$ & $0.12$ & $0.0025$ & $23$ \\\hline
  \end{tabular}
\end{center}
\end{table}

We also repeated the same calculations in 3D glasses but very similar results were obtained. However, as it was discussed earlier in this section, packing of granular materials is another network matter which can be studied using this framework. In addition, because these networks are essentially different from networks glasses in terms of ring distribution, preparation method, etc they might
shed light on the nature of multiple realizations and possibly tunneling states from a different perspective.

Fig.~\ref{fig:jammed} shows an example of a computer-generated jammed packing with 247 vertices in which two realizations (original conformation and alternative realization found by the modified single-cut algorithm) are superimposed. As it is shown, in some regions the displacements are more pronounced. However, compared to network glasses, the mean displacement is about one order of magnitude smaller.

The jammed packings are generated by minimizing the overlap among circles in contact. In the original network, all overlaps are smaller than a prespecified threshold. To find the alternative realization, however, the interaction between soft disks are replaced by harmonic springs. Therefore, both realizations have zero energy if interactions are harmonic springs since all edge lengths are equal but since adjacent vertices can also move, disks that were previously non-overlapping can intersect leading to additional energy or the system can undergo an unjamming process. Hence, an important question is whether the alternative realization is at energy minima if the interactions are based on the overlapping soft disks. In our tests, we observed that second realizations of jammed systems have generally large overlaps between disks but we have observed some examples in which even alternative realizations are jammed and at the energy minimum or very close to it.

\begin{figure}[t]
  \centering
  {\includegraphics[scale=0.175]{./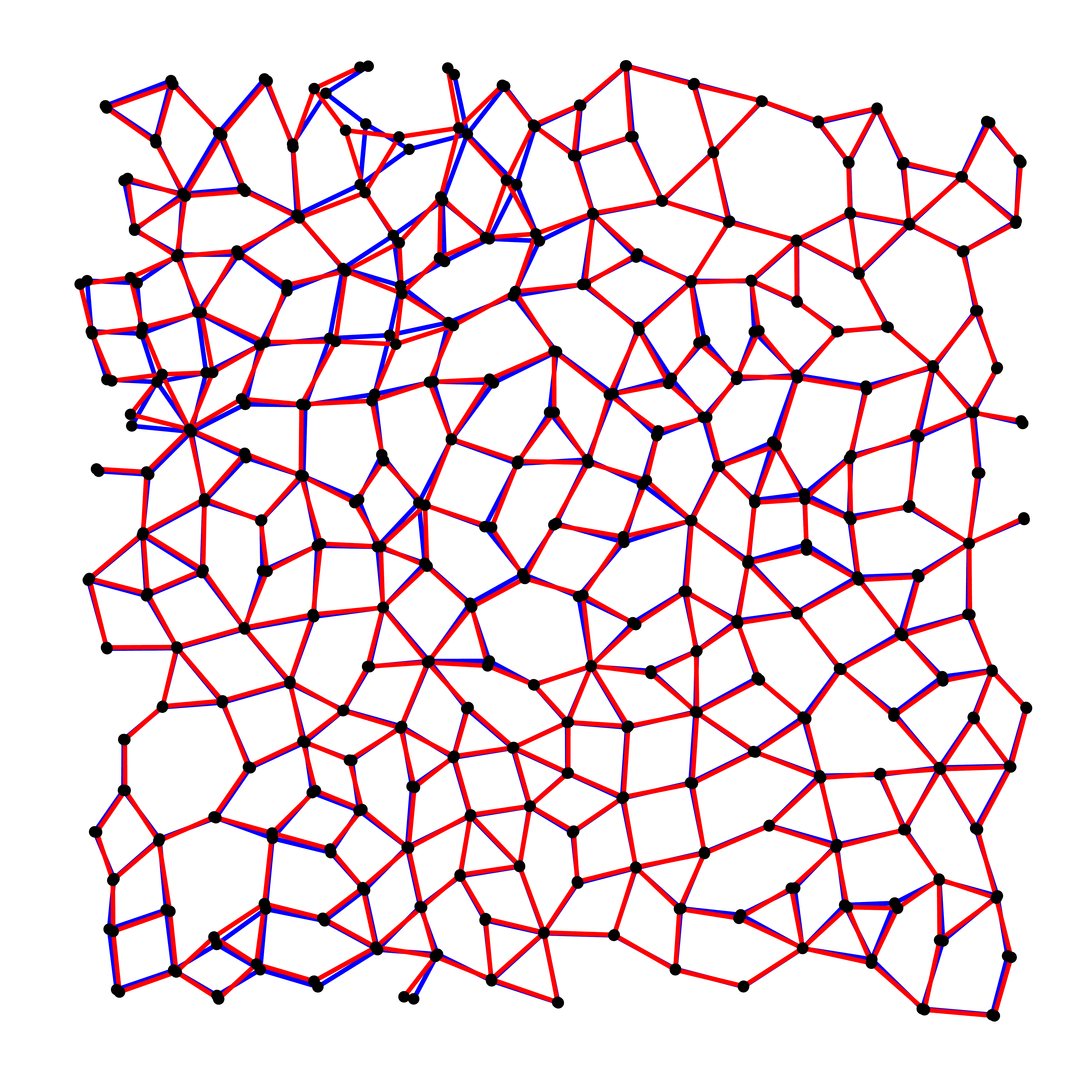}}
  \caption{\footnotesize  A jammed circle packing. In this representation, circles are replaced by their center. The original network is drawn with blue lines while the equivalent configuration found by modified single-cut algorithm is shown in red.}
    \label{fig:jammed}
\end{figure}

To draw a comparison between Trihex and 2D glasses, it seems that the
single-cut algorithm can only find solutions that belong to the same connected component
while realizations on other  connected components are energetically inaccessible since they
contain motions of larger units such as a rigid triangle (or tetrahedron).
Although, we think such branches exist in glasses, it is not computationally
feasible to find all the  connected components for such large systems similar to Trihex.

It is worth noting that the above discussion can be directly applied to bulk
glasses. We repeated the modified single edge cut algorithm for various silica
structures in three dimensions. The only modification required in 3D is that three
edges need to be removed to reach the isostatic point. Our results for bulk (3D) glasses
were very similar to the two-dimensional case.

\section{Questions}
There are a number of important open questions that we list here that require this work to be examined in a larger context.  The questions are general and go beyond the models considered in this paper.
\begin{itemize}
 \item The  maximum number of distinct solutions for the Trihex is 112.
 Note Mathematica often gives the same solution multiple times.  Where does
 this number come from? CayMos theory gives an upper bound of 128. Using numerology it is $2^7 -2^4$. Note that the problem cannot be reduced to a single polynomial where the number of real solutions is always less than or equal to the degree of the polynomial.
 \item Why nothing more complex than an open or closed retrograde trajectory found in all the examples here?
 \item Why are the trajectories all ``smooth'' with no singularities? This can, however, be proved in the generic case for the closed curves obtained in single-cut and CayMos algorithms, which are configuration spaces of 1-degree of freedom (dof) mechanisms.
 \item How general is this scheme?
 \item Why are there no bifurcations in the solutions?
\end{itemize}

\section{Discussion and Conclusion}
The main purpose of this collaboration is the present Theorems 1 and 2 and the single cut algorithm in a straightforward way so that it is accessible for future work. These are powerful statements about isostatic systems that we have just begun to explore here. It is hoped that the reader can see the potential and will pursue these methods further. We have taken the first steps with atomic clusters in three dimensions, and with tunnelling modes in glasses and jammed systems in two dimensions. We emphasize again that these approaches work in any dimension. The emphasis here on two dimensions is for simplicity and for ease of visualization. Theorem 1 is counter intuitive at first sight but becomes very natural when the circuit associated with a single cut (see Figure~\ref{fig:circuits}) is understood. We have shown that modestly large systems behave in the same way as smaller systems but the limit of large systems (tending to infinite size) are very different. This is important for solid state problems where systems have a size of the order of Avogadro's number ($\sim 10^{24}$) and this has important implications for the origin of tunnelling states in glasses. ``More is different''~\cite{anderson1972more}.

\section{Acknowledgment}
\begin{acknowledgments}
Finally one of us (MFT) would like to congratulate David Drabold upon reaching the temporal landmark of three score years and for all the interesting discussions at home and in more exotic locations that we have enjoyed over the years. His general approach to science and in particular his work on glassy networks has influenced the work in this paper.

The authors would like to thank financial support though NSF Grants
No. DMS 1564468 (Connelly, Gortler, Holmes-Cerfon, Sitharam, Thorpe). We would particularly like to thank Eric Corwin and Kenneth Stephenson for stimulating discussions.
\end{acknowledgments}

\bibliographystyle{apsrev4-2}
\bibliography{references,refs,library,dis}
\end{document}